\documentclass[a4paper,11pt,fleqn]{article}
\usepackage[T1]{fontenc}
\usepackage[margin=25mm]{geometry}
\usepackage[svgnames,dvipsnames]{xcolor}
\usepackage[british]{babel}
\usepackage{natbib}
\usepackage{microtype}
\usepackage{amsthm,mathtools,amssymb}
\usepackage{enumitem}
\usepackage{algorithm,algorithmic}
\usepackage{graphicx}
\usepackage{subcaption}
\usepackage{tikz}
\usetikzlibrary{automata,positioning,arrows.meta}

\newtheorem{theorem}{Theorem}[section]
\newtheorem{lemma}{Lemma}[section]
\newtheorem{proposition}{Proposition}[section]
\newtheorem{corollary}{Corollary}[section]
\newtheorem{definition}{Definition}[section]

\newtheorem{remark}{Remark}
\newtheorem{example}{Example}

\newcommand{\DI}{D_I}
\newcommand{\DO}{D_O}

\newcommand{\out}{\mathrm{out}}
\newcommand{\Lang}{\mathcal{L}}

\newcommand{\Nat}{\mathbb{N}}
\newcommand{\fun}{\longrightarrow}

\newcommand{\smta}{\textsc{Solve}}
\newcommand{\smsmt}{\textsc{SolveMaxSMT}}
\newcommand{\expr}{\textit{expr}}
\newcommand{\pwr}{\mathbb{P}}

\newcommand{\ttt}{\mathtt{true}}
\newcommand{\fff}{\mathtt{false}}
\newcommand{\ol}{\overline}

\newcommand{\TS}{\text{TS}}
\newcommand{\olg}{{\ol G}}

\usepackage[hidelinks]{hyperref}
\allowdisplaybreaks
\title{Testing and Learning Symbolic Finite State Machines}
\author{%
  Wen-ling Huang\thanks{Corresponding author. Email: \texttt{huang@uni-bremen.de}. ORCID: 0000-0002-9915-5357.}
  \and
  Jan Peleska\thanks{Email: \texttt{peleska@uni-bremen.de}. ORCID: 0000-0003-3667-9775.}}
\date{%
  \small University of Bremen, Department of Mathematics and Computer Science\\
  \small Bibliothekstrasse 1, 28359 Bremen, Germany}

\begin{document}
\maketitle

\begin{abstract}
Symbolic finite state machines (SFSMs) describe input/output behaviour using guards and output assignments with possibly infinite data domains. We study deterministic and completely specified SFSMs whose guards and output assignments depend only on the current input. We define finite representative input sets that contain witnesses for relevant guard overlaps and separating witnesses for output assignments that differ on those overlaps. Our main theorem shows that language equivalence of the finite instantiations implies language equivalence over the full input domain. This result transfers complete testing methods for deterministic finite state machines (DFSMs) to SFSMs, provided finite sets of admissible guards and output assignments and an upper bound on the number of distinguishable reachable states are known. 
Under these assumptions, a DFSM learner with complete testing can learn a finite instantiation, which is then lifted to an equivalent SFSM. We establish a bound on the size of representative input sets and give an SMT construction whose correctness and termination hold under stated solver assumptions.
\end{abstract}

\par\noindent\textbf{Keywords:}
Symbolic finite state machines ;
model-based testing ;
model learning ;
active automata learning ;
representative input sets 
\par

\section{Introduction}\label{sec:intro}
This article develops a unified framework for \emph{model-based testing} (MBT) and \emph{model learning} (ML) of   \emph{symbolic finite state machines} (SFSMs), based on finite representative input sets.

\paragraph{Background.}
 MBT checks whether a system under test (SUT) conforms to a given reference model. Relative to a specified fault domain and testing assumptions, a test suite is \emph{complete} if passing it is equivalent to conformance. Many complete testing methods have been developed for finite state machines (FSMs) with finite input and output alphabets, using language equivalence or language inclusion as conformance  relations~\cite{vasilevskii1973,chow:wmethod,luo_test_1994,hierons_testing_2004,petrenko_testing_2011,DBLP:conf/forte/DorofeevaEY05,SimaoPY09}. 
%DBLP:conf/pts/SimaoPY09
Model learning constructs a model of the observable SUT behaviour by applying input sequences and observing the outputs. Angluin's $L^*$-algorithm~\cite{Angluin1987} learns finite automata through membership and equivalence queries. Several algorithms adapt this approach to FSMs~\cite{Niese03, ShahbazG09,IsbernerHS14,DBLP:conf/tacas/VaandragerGRW22}. 

In our setting, complete FSM testing answers the equivalence queries. Each hypothesis serves as the reference model for test generation. A failed test provides a counterexample for the learner. If all tests pass, the hypothesis is accepted and learning terminates, provided that the upper state bound and the assumptions of the testing method hold. 

A learned model can be used to check requirements expressed, for example, in Linear Temporal Logic (LTL). If two models have the same language, then they satisfy the same LTL formulae over input/output observations. Once the model is known to be language equivalent to the SUT, model checking can establish whether the SUT satisfies such requirements. 

% Learning the model of the observable SUT behaviour is particularly attractive in the context of \emph{property checking}, where it has to be decided whether an SUT satisfies a property (i.e.~a requirement) specified in some logic such as Linear Temporal Logic (LTL). With a model reflecting the true SUT behaviour at hand, classical model checking algorithms can be used to verify whether the specified properties are satisfied by the SUT. For SUTs whose behaviour can be represented by FSMs, this so-called \emph{black-box checking} approach was first introduced by Peled et al.~\cite{JALC-2002-225}.
% Note that in this article, we do not develop SFSM property checking itself; rather, we focus on the MBT and learning foundations required for such applications.

% ...............................................................................
\paragraph{Motivation.}
 
SFSMs extend FSMs by replacing input symbols with guards over input variables and output symbols with assignments $y = e(x)$. Here $x$ and $y$ are tuples of input and output variables, and $e$ is a function. The data domains may be infinite, whereas classical FSM testing and learning require finite alphabets. In addition, the observed output $e(a)$ need not identify the assignment used by the SUT: different output assignments may have the same value at the selected input $a$. 

 Petrenko~\cite{DBLP:journals/sosym/Petrenko19} studied complete testing theories for SFSMs with symbolic inputs and outputs.\footnote{In his article, he used the term SIOFSM to distinguish symbolic finite state machines from less expressive machines introduced in his previous work. Complete test suites are called checking experiments.} He distinguished
an assignment/output fault domain from a more general transition fault domain. The latter permits arbitrary implementation guards but restricts output assignments to a given finite set. Implementations have
at most as many states as the reference model. For infinite input domains, completeness of an arbitrary concrete instantiation of his symbolic checking experiment also requires the $\Omega^d$-converter condition. It requires each reference-defined symbolic input sequence of length $d$ to produce a single symbolic output sequence in the implementation. We consider a different fault domain, defined by finite sets of admissible guards and output assignments. Our aim is to obtain complete concrete test suites without this condition and with an upper state bound that may exceed the number of reference states. We also use the finite input construction to support model learning.  

\paragraph{Main contributions.}

We define finite representative sets $\Sigma_I$	
that contain witnesses for the relevant guard overlaps and separating inputs for output assignments that differ on those overlaps.   Restricting an SFSM to $\Sigma_I$ gives a DFSM with finite input and output alphabets. Our main theorem shows that,  if $\Sigma_I$ is representative for two SFSMs, equivalence  on all input sequences from $\Sigma_I^*$ implies their language equivalence over the full concrete input domain. This result has two applications:
\begin{itemize}
    \item  For model-based testing, a test suite that is complete for the DFSM fault domain is also complete for the corresponding SFSM fault domain. Thus, any complete DFSM testing method whose assumptions hold can be used to generate executable SFSM tests.
    
    \item For model learning,  we construct a representative input set from a known input partition and a finite set of output assignments.
    An $L^*$-style DFSM learning algorithm learns the finite instantiation and complete DFSM testing answers its equivalence queries. We then lift the minimal DFSM hypothesis to an SFSM. Under the assumptions stated below, the lifted hypothesis is language equivalent to the SUT over the full input domain.
\end{itemize}

Our earlier work~\cite{DBLP:conf/fsen/HuangKP23,DBLP:conf/pts/BruningGHKPS23,DBLP:journals/scp/HuangKP24,zenodo-fsen-techreport-2022} investigated complete SFSM testing and property-oriented testing. 
The general construction selects representative inputs from I/O equivalence classes. The specialised approach uses an input partition and a separability condition on output expressions. The present construction can yield smaller representative input alphabets and may therefore reduce the resulting test suites.

% Compared with our earlier constructions~\cite{DBLP:conf/fsen/HuangKP23,DBLP:conf/pts/BruningGHKPS23,DBLP:journals/scp/HuangKP24,zenodo-fsen-techreport-2022}, the present construction can yield smaller representative input alphabets and may therefore reduce the resulting test suites.
We give an SMT-based construction of representative input sets, prove its correctness and termination, and bound the number of solver calls. Appendix~\ref{app:algorithms} gives specialised constructions, MaxSMT variants, and a greedy set-cover procedure for reducing the number of representatives.

\paragraph{Assumptions.}
 Throughout the paper, we consider deterministic and completely specified SFSMs whose guards and output assignments depend only on the current input. We assume that the SUT can be reset reliably to its initial state before each test case and membership query.

For model-based testing, finite sets $G$ and $E$ are given such that the SUT has an SFSM representation using only guards from $G$ and output assignments from $E$.  We assume
an upper bound $m$ on the number of distinguishable reachable states of this representation. The representation need not be unique, and the SUT need not use these guards and output assignments in its implementation. 

For model learning, static analysis is assumed to provide a finite set of branching conditions and a finite set containing all output assignments of an SFSM representation of the SUT. The input partition induced by the branching conditions must refine every guard occurring in that representation. We also require an upper bound on its number of distinguishable reachable states. This bound  may be obtained by static analysis or abstract interpretation. The solver assumptions used to construct representative inputs are stated in Section~\ref{ssec:construction-assumptions} and Appendix~\ref{ssec:smt}.
 
\paragraph{Overview.}
Section~\ref{sec:prelim} introduces the SFSM model, its concrete semantics, finite instantiations, and output distinguishability. Section~\ref{sec:ris} defines general, model-specific, and pair-specific representative input sets. Section~\ref{sec:mb} proves the main equivalence theorem and its application to complete testing. Section~\ref{sec:learning} presents the learning method and the lifting of DFSM hypotheses to SFSMs. Section~\ref{sec:algorithms} gives an SMT-based algorithm for computing representative input sets and proves its correctness and termination. Detailed variants appear in Appendix~\ref{app:algorithms}. Section~\ref{sec:related} discusses related work, and Section~\ref{sec:conc} concludes the paper.

% =================================================================================
\section{Preliminaries}\label{sec:prelim}
This section introduces the SFSM model and notation used in the paper.

\paragraph{Symbolic finite state machines.}
A \emph{symbolic finite state machine} is a tuple
\[
M=(Q,q_0,V,D,G,E,h),
\]
where $Q$ is a finite, non-empty set of states and $q_0\in Q$ is the initial state. The finite variable set $V=V_I\cup V_O$ consists of input variables $V_I$ and output variables $V_O$, with $V_I\cap V_O=\emptyset$. Each variable $v\in V$ has a domain $D_v$, and $D$ is the collection of these domains. 

Fix orderings of input and output variables and write 
%$V_I=\{\chi_1,\dots,\chi_k\}$ and $V_O=\{o_1,\dots,o_\ell\}$ for input and output variables $\chi_i$ and $o_j$, respectively. We write   
\[x=(\chi_1,\dots,\chi_k),\qquad y=(o_1,\dots,o_\ell).\] 
The concrete input and output domains are
\[
\DI := D_{\chi_1}\times\cdots\times D_{\chi_k},
\qquad
\DO := D_{o_1}\times\cdots\times D_{o_\ell}.
\]
These domains may be infinite. 

Let $G$ be a finite set of guards over the input variables. The set $E$ is a finite set of output assignments of the form $y=e(x)$, where $e$ is a function $e : D_I\fun D_O$. Since the syntactic form of output assignments is always $y = e(x)$, we write $e$ for the assignment $y=e(x)$, so $E = \{ e_1,\dots, e_p \}$ denotes the corresponding set of output functions. 
The transition relation is
\[
h\subseteq Q\times G\times E\times Q .
\]
We write $q\xrightarrow{g/e}q'$ for a transition $(q,g,e,q')\in h$.

A concrete input $a=(a_1,\dots,a_k)\in\DI$ is a \emph{witness} (also called a \emph{model}) for a guard $g\in G$ 
if $g$ evaluates to true when each $\chi_i$ is assigned the value $a_i$. We use the same symbol $g$ for the formula and its set of satisfying inputs, and write $a \in g$ when $a$ satisfies $g$. Under this convention, conjunction $g_1\wedge g_2$ corresponds to intersection  $g_1\cap g_2$ and disjunction $g_1\vee g_2$ corresponds to union $g_1\cup g_2$.  The inclusion $g_1\subseteq g_2$ means that $g_1\implies g_2$ holds for every concrete input.
We omit transitions with unsatisfiable guards, since no concrete input can trigger them.

An SFSM $M$ is \emph{deterministic} if no concrete input enables two distinct transitions from the same state. Formally, for any two transitions $(q,g_i,e_i,q_i')\in h$, $i=1,2$, and any $a\in\DI$, the condition $a\in g_1\cap g_2$ implies $g_1=g_2$, $e_1=e_2$, and $q_1'=q_2'$. It is \emph{completely specified} if, for every $q\in Q$ and every $a\in\DI$, there is a transition $(q,g,e,q')\in h$ with $a\in g$. Throughout the paper, we consider deterministic and completely specified SFSMs. 

For $q\in Q$, let
\[
G(q)=\{g\in G\mid \exists (e,q')\in E\times Q:(q,g,e,q')\in h\}
\]
be the set of guards on transitions leaving $q$. The set of guards occurring in the transition relation is
\[
G_h=\bigcup_{q\in Q}G(q).
\]
The \emph{symbolic input/output alphabet} of $M$ is its set of transition labels,
\[
A_M=\{(g,e)\in G\times E\mid \exists q,q'\in Q:(q,g,e,q')\in h\}.
\]

For $(q,g,e,q')\in h$, define the symbolic output by $\out(q,g)=e$. For a concrete input $a\in g$, define $\out(q,a)=e(a)$. Thus, $\out(q,g)$ is an output assignment and $\out(q,a)$ is a concrete output value. The symbolic output is  well defined for every $g\in G(q)$, 
% \[
% \out : \{(q,g)\mid q\in Q,\ g\in G(q)\}\to E,
% \]
and the concrete output function is 
\[
\out : Q\times\DI\to\DO; (q,a)\mapsto \out(q,g)(a), \ \text{where $g$ is the unique guard in $G(q)$ containing $a$.} 
\]

\paragraph{Concrete semantics.}
For a set $A$, let $A^*$ denote the set of finite sequences over $A$, including the empty sequence $\varepsilon$. 
A symbolic input/output sequence
\[
g_1/e_1.\ g_2/e_2.\ \dots.\ g_k/e_k
\]
is executable from state $q\in Q$ if there exist transitions
\[
(q,g_1,e_1,q_1),\ (q_1,g_2,e_2,q_2), \dots,\ (q_{k-1},g_k,e_k,q_k)\in h .
\]
For each concrete input sequence $\alpha=a_1\dots a_k\in \DI^*$ with $a_i\in g_i$ for all $1\le i\le k$, the corresponding  concrete trace is
\[
a_1/e_1(a_1).\ a_2/e_2(a_2).\ \dots.\ a_k/e_k(a_k).
\]
We regard $a/b$ as the input/output pair $(a,b)$, so concrete traces belong to $(\DI\times D_O)^*$. The extension of $\out$ to finite sequences is 
\[\out(q,\alpha)=e_1(a_1).e_2(a_2).\ \dots.\ e_k(a_k),\qquad \out(q,\varepsilon)=\varepsilon.\]
The \emph{language of a state} is the set of all concrete traces executable from that state, including the empty trace.  The \emph{language of $M$}, denoted $\Lang(M)$, is the language of its initial state. Two states are \emph{equivalent} if their languages coincide, and \emph{distinguishable} otherwise. A state is \emph{reachable} if some concrete input sequence takes the machine from $q_0$ to that state.  The number of \emph{distinguishable reachable states} is the number of language-equivalence classes among the reachable states. An SFSM is \emph{reduced} if every state is reachable and no two distinct states are equivalent.

\paragraph{Output distinguishability.}
Let $X\subseteq \DI$. Two output assignments $e$ and $e'$ are \emph{distinguishable on $X$} if there exists a concrete input $a\in X$ such that $e(a)\neq e'(a)$. Such an input is  a \emph{separating witness} for $e$ and $e'$ on $X$. Otherwise, $e$ and $e'$ are \emph{equivalent on $X$}, denoted by $e\equiv_X e'$. Thus, 
\[e\equiv_X e' \iff \forall a\in X: e(a)=e'(a). \]

\paragraph{Finite instantiations.}
Let $M=(Q,q_0,V,D,G,E,h)$ be an SFSM, and let $\Sigma_I\subseteq\DI$ be a non-empty finite set. The \emph{finite instantiation of $M$ over $\Sigma_I$} is the DFSM
\[
M|_{\Sigma_I}=(Q,q_0,\Sigma_I,\Sigma_O,h|_{\Sigma_I}),
\]
with output alphabet
\[
\Sigma_O=\{e(a)\mid a\in\Sigma_I,\ (g,e)\in A_M,\ a\in g\}.
\] 
For $q,q'\in Q$, $a\in\Sigma_I$, and $b\in \Sigma_O$, its transitions are defined by 
\[
(q,a,b,q')\in h|_{\Sigma_I}
\quad\Longleftrightarrow\quad
\exists (g,e)\in A_M: (q,g,e,q')\in h\wedge a\in g\wedge b=e(a).
\]
The output alphabet $\Sigma_O$ is finite because $\Sigma_I$ and $A_M$ are finite. Since each state and concrete input enable exactly one symbolic transition, $M|_{\Sigma_I}$ is deterministic and completely specified. The construction retains the state set $Q$; it does not remove unreachable states or merge equivalent states. 

The finite instantiation preserves exactly the behaviour of $M$ on input sequences from $\Sigma_I^*$:
\[\Lang(M|_{\Sigma_I})=\Lang(M)|_{\Sigma_I}=\Lang(M)\cap (\Sigma_I\times D_O)^*.\]
Here, $\Lang(M)|_{\Sigma_I}$ denotes the traces of $M$ whose input components belong to $\Sigma_I$. Thus, the finite instantiation is an \emph{under-approximation} of the full concrete language: $\Lang(M|_{\Sigma_I}) \subseteq \Lang(M)$. It contains no trace with inputs outside $\Sigma_I$. 
%-------------------------------------------------------
\begin{example}[A deterministic and completely specified SFSM]\label{ex:sfsm}
Consider the SFSM $M=(Q,q_0,V,D,G,E,h)$ with $Q=\{q_0,q_1,q_2,q_3\}$, initial state $q_0$, input variable $x$, output variable $y$, and domain $\DI = \DO=\mathbb{Z}$. The guard set is 
$G=\{g_1,\ldots,g_6\}$, where
\[
\begin{aligned}
&g_1\equiv(x\le0), &&g_2\equiv(x>0),\\
&g_3\equiv(x\equiv0\pmod 2), &&g_4\equiv(x\equiv1\pmod 2),\\
&g_5\equiv(x\in(-10,10)), &&g_6\equiv(x\notin(-10,10)).
\end{aligned}
\]
The output assignment set is $E = \{ e_1,\dots,e_5 \}$ with
\[
e_1\equiv (y=0),\quad e_2\equiv (y=x),\quad e_3\equiv (y=x^2),\quad e_4\equiv (y=2x),\quad e_5\equiv (y=x+1).
\]
The transition relation $h$ consists of
\[
\begin{array}{rclcrcl}
q_0 &\xrightarrow{g_1\,/\,e_1}& q_0, &\qquad& q_0 &\xrightarrow{g_2\,/\,e_4}& q_1,\\[2pt]
q_1 &\xrightarrow{g_3\,/\,e_2}& q_2, &&
q_1 &\xrightarrow{g_4\,/\,e_5}& q_3,\\[2pt]
q_2 &\xrightarrow{g_3\,/\,e_3}& q_2, &&
q_2 &\xrightarrow{g_4\,/\,e_4}& q_3,\\[2pt]
q_3 &\xrightarrow{g_5\,/\,e_1}& q_2, &&
q_3 &\xrightarrow{g_6\,/\,e_1}& q_3.
\end{array}
\]
At each state, the two outgoing guards are disjoint and cover $\DI$. Hence $M$ is deterministic and completely specified. Figure~\ref{fig:sfsm} shows the machine. 
\begin{figure}
    \centering
    
\begin{center}
\begin{tikzpicture}[>=Stealth, node distance=30mm, font=\scriptsize, every state/.style={draw,minimum size=8mm}]
\node[state, initial] (q0) {$q_0$};
\node[state, right=of q0] (q1) {$q_1$};
\node[state, below=of q1] (q3) {$q_3$};
\node[state, left=of q3] (q2) {$q_2$};
\path[->]
(q0) edge[loop above] node{$x\le 0\,/\,y=0$} (q0)
(q0) edge node[above]{$x>0\,/\,y=2x$} (q1)
(q1) edge node[above left]{$x\equiv 0\ (\mathrm{mod}\ 2)\,/\,y=x$} (q2)
(q1) edge node[right]{$x\equiv 1\ (\mathrm{mod}\ 2)\,/\,y=x+1$} (q3)
(q2) edge[loop below] node{$x\equiv 0\ (\mathrm{mod}\ 2)\,/\,y=x^2$} (q2)
(q2) edge[bend left=18] node[midway, sloped, above, align=center]{$x\equiv 1\ (\mathrm{mod}\ 2)\,/\,y=2x$} (q3)
(q3) edge[bend left=18] node[midway, sloped, below, align=center]{$x\in (-10,10)\,/\,y=0$} (q2)
(q3) edge[loop right] node{$x\not\in (-10,10)\,/\,y=0$} (q3);
\end{tikzpicture}
\end{center}
    \caption{A deterministic and completely specified SFSM $M$.}
    \label{fig:sfsm}
\end{figure}
\end{example}

\begin{example}[An SFSM with two input and two output variables]\label{ex:sfsm-2in-2out}
Consider the SFSM $M=(Q,q_0,V,D,G,E,h)$ with $Q=\{q_0,q_1,q_2\}$, initial state $q_0$, input variables $u,v$, and output variables $y_1,y_2$. All variables have integer domains, so $\DI=\DO=\mathbb{Z}^2$. The guard set is $G=\{g_1,g_2,g_3,g_4\}$, where
\[
 g_1\equiv (u\le 0),\quad g_2\equiv (u>0),\quad
 g_3\equiv (v\le 0),\quad g_4\equiv (v>0).
\]
The output assignment set is $E=\{e_1,e_2,e_3,e_4\}$, where 
\[
e_1(u,v)=(0,0),\quad
e_2(u,v)=(u,v),\quad
e_3(u,v)=(u+v,u-v),\quad
e_4(u,v)=(2u,v+1).
\]
Each function $e_i$ defines the assignment $(y_1,y_2)=e_i(u,v)$.
% \[
% \begin{aligned}
% &e_1\equiv (y_1=0\wedge y_2=0),
% &&e_2\equiv (y_1=u\wedge y_2=v),\\
% &e_3\equiv (y_1=u+v\wedge y_2=u-v),
% &&e_4\equiv (y_1=2u\wedge y_2=v+1).
% \end{aligned}
% \]
The transition relation consists of 
\[
\begin{array}{rclcrcl}
q_0 &\xrightarrow{g_1/e_1}& q_0, &\qquad& q_0 &\xrightarrow{g_2/e_2}& q_1,\\[2pt]
q_1 &\xrightarrow{g_3/e_3}& q_2, && q_1 &\xrightarrow{g_4/e_4}& q_0,\\[2pt]
q_2 &\xrightarrow{g_1/e_1}& q_0, && q_2 &\xrightarrow{g_2/e_2}& q_2.
\end{array}
\]
Figure~\ref{fig:sfsm2} shows the machine.

The set $G$ does not partition $\DI=\mathbb{Z}^2\colon$ for example, $g_1$ overlaps with both $g_3$ and $g_4$. However, the guards of the outgoing transitions at each state are pairwise disjoint and cover $\DI\colon$ states $q_0$ and $q_2$ use guards $g_1$ and $g_2$, while $q_1$ uses $g_3$ and $g_4$. Hence, the machine is deterministic and completely specified.
\begin{figure}
    \centering    
%\begin{center}
\begin{tikzpicture}[>=Stealth, node distance=30mm, font=\scriptsize, every state/.style={draw,minimum size=8mm}]
\node[state, initial] (q0) {$q_0$};
\node[state, right=of q0] (q1) {$q_1$};
\node[state, below=of q1] (q2) {$q_2$};
\path[->]
(q0) edge[loop above] node{$u\le 0/y_1=0\wedge y_2=0$} (q0)
(q0) edge node[above]{$u>0/y_1=u\wedge y_2=v$} (q1)
(q1) edge[bend left=18] node[below]{$v\le 0/y_1=u+v\wedge y_2=u-v$} (q2)
(q1) edge[bend left=18] node[above]{$v>0/y_1=2u\wedge y_2=v+1$} (q0)
(q2) edge[bend left=18] node[left]{$u\le 0/y_1=0\wedge y_2=0$} (q0)
(q2) edge[loop right] node{$u>0/y_1=u\wedge y_2=v$} (q2);
\end{tikzpicture}
%\end{center}
%\includegraphics[width=0.5\linewidth]{}
    \caption{An SFSM with two input and two output variables.}
    \label{fig:sfsm2}
\end{figure}
\end{example}

%---------------------------------------------------
\begin{table}[htbp]
    \centering
    \begin{tabular*}{\linewidth}
        {@{\extracolsep{\fill}}ll}
    \hline
    Symbol    & Meaning\\ \hline
   $M,M'$& Reference and implementation SFSMs in testing; $M$ is unknown in learning.\\ \hline
  $G, E$ & Finite sets of admissible guards and output assignments.\\ \hline
  $G_h, E_h, A_M$ & Guards, assignments and labels occurring in $M$.\\ \hline
  $G',\olg$ & Branching conditions and their non-empty input classes in learning.\\ \hline
  $\ol M$ & Refinement of the unknown $M$ to guards from $\olg$.\\ \hline
  $\Sigma_I,M|{\Sigma_I}$ &Finite concrete input set and the corresponding DFSM instantiation.\\ \hline
  $H_{\Sigma_I},H, \hat H$ &Learned minimal DFSM, lifted SFSM and coarsened SFSM.\\ \hline
  $N, E_{\varphi}$ &Minimal implementation quotient; assignment representatives on class $\varphi$ .\\ \hline
  $m,n$ &Upper bound on implementation states; reference state count as specified locally.\\ \hline
    \end{tabular*}
    \caption{Main notation.}
    \label{tab:notation}
\end{table}

% ====================================================================
\section{Representative input sets}\label{sec:ris}

Two SFSMs may have language-equivalent finite instantiations over a finite input set $\Sigma_I$ without being language-equivalent over the full input domain $\DI$. We define \emph{representative input sets $\Sigma_I$} by two requirements: witnesses for guard overlaps and separating witnesses for output assignments. Under these requirements, equivalence of the finite instantiations implies equivalence over $\DI$, as shown in Section~\ref{sec:mb}. Section~\ref{sec:learning} uses this result for model learning.  

\begin{definition}\label{def:repGE}
Let $G$ be a finite set of guards and let $E$ be a finite set of
output assignments. A finite input set $\Sigma_I \subseteq \DI$ is \emph{representative for $(G,E)$} if the following requirements hold for all $g,g'\in G$ and all $e,e'\in E$.
\begin{enumerate}
    \item \textit{Guard-overlap requirement.} If $g\cap g'\neq \emptyset$ then $g\cap g'\cap\Sigma_I\neq\emptyset$.
    \item \textit{Output-separation requirement.} If $e$ and $e'$ are distinguishable on $g\cap g'$ then there exists a separating input $a\in g\cap g'\cap \Sigma_I$ such that $e(a) \neq e'(a)$.
\end{enumerate}
\end{definition}
The case $g=g'$ is included, so every non-empty guard contains an input in $\Sigma_I$. This also witnesses satisfiability of each guard, consistently with omitting unsatisfiable transitions. Different pairs of output assignments may require different separating inputs.
Since $G$ and $E$ are finite, there are finitely many requirements. Choosing one witness for each applicable requirement gives a finite representative set. Section~\ref{sec:algorithms} presents an algorithm for constructing such sets. Further variants are in Appendix~\ref{app:algorithms}. 

%==========================================
\begin{proposition}
   Let $r=|G|, t=|E|$. A representative input set $\Sigma_I$ for $(G,E)$ exists with cardinality at most $\max\{\frac{r(r+1)}{2}, \frac{r(r+1)t(t-1)}{4}\} $. Moreover, consider the partition of $\DI\times \DO$ induced by $G$ and $E$.  Choose one concrete input-output pair from each class and let $R$ be the set of their input components. Then $R$ is representative for $(G,E)$. In particular, the minimum cardinality of a representative set for $(G,E)$ is at most $|R|$. 
\end{proposition}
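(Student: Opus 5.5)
The plan is to prove the two parts separately: first a direct counting argument for the cardinality bound, then a check of the two requirements of Definition~\ref{def:repGE} for $R$. The final sentence about the minimum cardinality follows at once from the second part.

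\emph{Cardinality bound.} I would group the requirements of Definition~\ref{def:repGE} by unordered guard pairs $\{g,g'\}$, with $g=g'$ allowed. There are $r(r+1)/2$ such pairs, and both requirements are symmetric in $g,g'$ and in $e,e'$.

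Fix a pair with $g\cap g'\neq\emptyset$ (empty overlaps impose nothing).
\begin{itemize}
\item If no two assignments in $E$ are distinguishable on $g\cap g'$, one arbitrary witness $a\in g\cap g'$ satisfies the guard-overlap requirement.
\item Otherwise, for each unordered pair $\{e,e'\}$ of distinct assignments that are distinguishable on $g\cap g'$, choose one separating witness in $g\cap g'$. There are at most $t(t-1)/2$ such pairs.
\end{itemize}
In the second case, any chosen separating witness already lies in $g\cap g'$, so it also discharges the guard-overlap requirement for $\{g,g'\}$. Hence each guard pair contributes at most $\max\{1,t(t-1)/2\}$ inputs.

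Taking the union over all guard pairs gives a finite set satisfying every requirement. Its cardinality is at most
\[
\frac{r(r+1)}{2}\cdot\max\Bigl\{1,\frac{t(t-1)}{2}\Bigr\}
=\max\Bigl\{\frac{r(r+1)}{2},\frac{r(r+1)t(t-1)}{4}\Bigr\}.
\]

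\emph{Representativeness of $R$.} I would make the induced partition explicit. Two pairs $(a,b)$ and $(a',b')$ lie in the same class iff
\begin{itemize}
\item $a\in g \Leftrightarrow a'\in g$ for every $g\in G$, and
\item $b=e(a)\Leftrightarrow b'=e(a')$ for every $e\in E$.
\end{itemize}
There are at most $2^{r+t}$ classes, so $R$ is finite. Pinning down this reading of ``partition induced by $G$ and $E$'' is the only real subtlety.

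\emph{Guard-overlap requirement.} Suppose $a\in g\cap g'$. The class of $(a,e(a))$, for any $e$, has a chosen representative $(a',b')$. By the first class condition, $a'\in g\cap g'$, and $a'\in R$.

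\emph{Output-separation requirement.} Suppose $a\in g\cap g'$ with $e(a)\neq e'(a)$. Consider the pair $(a,e(a))$: it satisfies $b=e(a)$ but not $b=e'(a)$. Its class representative $(a',b')$ therefore satisfies:
\begin{itemize}
\item $a'\in g\cap g'$, by the guard conditions;
\item $b'=e(a')$ and $b'\neq e'(a')$, by the assignment conditions.
\end{itemize}
Hence $e(a')\neq e'(a')$, so $a'\in R\cap g\cap g'$ is a separating witness.

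This shows $R$ is representative for $(G,E)$. Consequently the minimum cardinality of a representative set is at most $|R|$.
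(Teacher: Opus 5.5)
Your proposal is correct and follows essentially the same route as the paper. For the bound, you count at most $\max\{1,\frac{t(t-1)}{2}\}$ witnesses for each of the $\frac{r(r+1)}{2}$ unordered guard pairs. For $R$, you show that a class representative inherits membership in $g\cap g'$ and the pattern of satisfying $y=e(x)$ while violating $y=e'(x)$. You also make explicit two points the paper leaves implicit: the identity $\frac{r(r+1)}{2}\max\{1,\frac{t(t-1)}{2}\}=\max\{\frac{r(r+1)}{2},\frac{r(r+1)t(t-1)}{4}\}$, and the bound of at most $2^{r+t}$ classes, which makes $R$ finite as Definition~\ref{def:repGE} requires.
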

\begin{proof}
    Since $|G|=r$ and $|E|=t$, there are $r(r+1)/2$ guard overlaps and $t(t-1)/2$ pairs of distinct output assignments. For each non-empty guard overlap, choose one separating witness for each distinguishable output-assignment pair. At most $t(t-1)/2$ witnesses are needed. If no pair is distinguishable on the current overlap, choose one witness from that overlap. Hence, there exists a representative set with at most $\max\{\frac{r(r+1)}{2}, \frac{r(r+1)t(t-1)}{4}\} $ inputs. 

    For the second claim, a witness $a$ of a guard overlap together with any concrete output  belongs to an I/O class. The selected I/O valuation from this class has an input component in the same guard overlap. If $e_p(a)\neq e_q(a)$, $(a,e_p(a))$ satisfies the output assignment $y=e_p(x)$ and violates $y=e_q(x)$. Every concrete I/O valuation in this I/O class has this property. Its selected input component therefore separates $e_p$ and $e_q$ on the overlap. Hence, $R$ satisfies both requirements of Definition~\ref{def:repGE}. 
\end{proof}
The I/O partition is defined in our earlier work~\cite{zenodo-fsen-techreport-2022}. Two valuations $(a,b),(a',b')\in \DI\times \DO$ are equivalent if they satisfy the same guards and output assignments. 
The following example illustrates how representative input sets contribute to uncovering errors in an SUT.  
\begin{example}\label{ex:rep}
Consider the SFSM $M$ given in Example~\ref{ex:sfsm} over $\DI=D_O=\mathbb{Z}$ with guard set $G=\{g_1,\dots, g_6\}$ and set of output assignments $E=\{e_1,\dots, e_5\}$.

Then $\Sigma_I=\{-11,-2,7,14\}$ is representative for $(G,E)$. Table~\ref{tab:set cover} lists the selected inputs in each guard overlap. Every non-empty guard overlap contains a witness from $\Sigma_I$, so the guard-overlap requirement is satisfied. Table~\ref{tab:out} shows that the five output assignments yield pairwise distinct values at each selected input. Thus every selected input in a guard overlap separates all distinct output assignments in $E$, and the output-separation requirement also holds.   

\begin{table}[htbp]
    \centering
    \begin{tabular*}{.65\linewidth}
        {@{\extracolsep{\fill}}|c|c|c|c|c|c|c|}
    \hline
        & $g_1$&$g_2$& $g_3$&$g_4$& $g_5$& $g_6$\\ \hline
   $g_1$& $-11$, $-2$&$\emptyset$& $-2$&$-11$& $-2$& $-11$\\ \hline
  $g_2$ & $\emptyset$&$7$,$14$& $14$&$7$& $7$ & $14$\\ \hline
  $g_3$ & $-2$&$14$& $-2$,$14$&$\emptyset$& $-2$& $14$\\ \hline
  $g_4$ & $-11$&$7$& $\emptyset$&$-11$, $7$& $7$& $-11$\\ \hline
  $g_5$ & $-2$&$7$& $-2$&$7$& $-2$,$7$& $\emptyset$\\ \hline
  $g_6$ & $-11$&$14$& $14$&$-11$& $\emptyset$& $-11$,$14$\\ \hline
    \end{tabular*}
    \caption{Selected inputs in each guard overlap. The entry in row $i$ and column $j$ is $g_i\cap g_j\cap \Sigma_I$.}
    \label{tab:set cover}
\end{table}

\begin{table}[htbp]
    \centering
    \begin{tabular*}{.65\linewidth}
        {@{\extracolsep{\fill}}|c|c|c|c|c|c|}
        \hline
        $a$ & $e_1(a)$ & $e_2(a)$ & $e_3(a)$ & $e_4(a)$ & $e_5(a)$ \\ \hline
        $-11$ & $0$ & $-11$ & $121$ & $-22$ & $-10$ \\ \hline
        $-2$  & $0$ & $-2$  & $4$   & $-4$  & $-1$  \\ \hline
        $7$   & $0$ & $7$   & $49$  & $14$  & $8$   \\ \hline
        $14$  & $0$ & $14$  & $196$ & $28$  & $15$  \\ \hline
    \end{tabular*}
    \caption{$\Sigma_I$ fulfils the output-separation requirement, since every $a\in \Sigma_I$ satisfies $e_p(a)\neq e_q(a)$, for $1\le p<q\le 5 $.}
    \label{tab:out}
\end{table}

The guards in $G$ define eight input equivalence classes, where two inputs are equivalent if they are in the same guards. Definition~\ref{def:repGE} requires witnesses for pairwise guard overlaps, not for each input equivalence class. The four concrete inputs of $\Sigma_I$ suffice.

Now suppose that the SUT is represented by an SFSM $M'=(Q',q_0',V,D,G,E,h')$ shown in Figure~\ref{fig:sut}, where $Q'=\{q_0', q_1', q_2', q_3'\}$. It differs from $M$ only in the guards of transitions leaving $q_1'$. The transitions $(q_1,g_3,e_2,q_2)$ and $(q_1,g_4,e_5,q_3)$ of $M$ are replaced by $(q_1',g_1,e_2,q_2')$ and $(q_1',g_2,e_5,q_3')$. 
All other transitions are the same as in $M$, with primed state names. Since $g_1$ and $g_2$ are disjoint and cover $\DI$,
$M'$ is deterministic and completely specified.

\begin{figure}[htbp]
    \centering    
%\begin{center}
\begin{tikzpicture}[>=Stealth, node distance=30mm, font=\scriptsize, every state/.style={draw,minimum size=8mm}]
\node[state, initial] (q0) {$q_0'$};
\node[state, right=of q0] (q1) {$q_1'$};
\node[state, below=of q1] (q3) {$q_3'$};
\node[state, left=of q3] (q2) {$q_2'$};
\path[->]
(q0) edge[loop above] node{$x\le 0\,/\,y=0$} (q0)
(q0) edge node[above]{$x>0\,/\,y=2x$} (q1)
(q1) edge node[above left]{\color{red}$x\le 0 \,/\,y=x$} (q2)
(q1) edge node[right]{\color{red}$x>0\,/\,y=x+1$} (q3)
(q2) edge[loop below] node{$x\equiv 0\ (\mathrm{mod}\ 2)\,/\,y=x^2$} (q2)
(q2) edge[bend left=18] node[midway, sloped, above, align=center]{$x\equiv 1\ (\mathrm{mod}\ 2)\,/\,y=2x$} (q3)
(q3) edge[bend left=18] node[midway, sloped, below, align=center]{$x\in (-10,10)\,/\,y=0$} (q2)
(q3) edge[loop right] node{$x\not\in (-10,10)\,/\,y=0$} (q3);
\end{tikzpicture}
%\end{center}
%\includegraphics[width=0.5\linewidth]{}
    \caption{An SFSM representation $M'$ of the SUT with two incorrect guards.}
    \label{fig:sut}
\end{figure}

To detect faults in $M'$, first apply the input $7\in g_2\cap \Sigma_I$ to reach state $q_1$ in $M$ and state $q_1'$ in $M'$. Both $M$ and $M'$ produce output $2\cdot 7=14$. Next, apply input $14\in g_3\cap g_2\cap \Sigma_I$.
In $M$, input $14$ triggers the transition labelled by $g3/e2$, producing output $e_2(14)=14$. In $M'$ it triggers the erroneous transition
labelled by $g2/e5$, producing output $e_5(14)=15$. Hence, 
the input sequence $7.14\in \Sigma_I^*$
detects a fault in $M'$.
\end{example}

%------------------------------------------------------
% representative for (M,G,E)
For MBT, the reference model $M$ is known. We assume that the SUT has an SFSM representation $M'$ whose
transitions use only guards from a given set $G$ and output assignments from a given set $E$. The labels of $M$ need not belong to $G\times E$.
 
\begin{definition}\label{def:repMGE}
 
Let $M$ be an SFSM, $G_h$ the set of guards occurring in $M$ and $A_M$ its set of transition labels. Let $G$ and $E$ be finite sets of guards and output assignments, respectively, over the same variables and domains as $M$. 
A finite input set $\Sigma_I \subseteq \DI$ is \emph{representative for $(M,G,E)$} if 
\begin{enumerate}
\item The guard-overlap requirement holds for every $g\in G_h$ and every $g'\in G$.
\item The output-separation requirement holds for every $(g,e)\in A_M$ and every $(g',e')\in G\times E$.     
\end{enumerate}
\end{definition}
The \emph{model-specific} definition compares labels occurring in $M$ with possible SUT labels from $G\times E$.
Let
\[
E_h=\{e\mid \exists g\colon (g,e)\in A_M\}
\]
be the set of output assignments occurring in $M$. Every finite input
set that is representative for
$(G_h\cup G,E_h\cup E)$ is also representative for
$(M,G,E)$, but the converse need not hold. Thus, the minimum
cardinality of a representative set for $(M,G,E)$ is no greater than that 
for $(G_h\cup G,E_h\cup E)$, and it may be strictly smaller.

%----------------------------------------------------
% representative for (M,M')
Suppose that the transition labels of an SFSM $M'$ representing the SUT are also known. The \emph{pair-specific} definition then requires witnesses only between labels occurring in $M$ and  labels occurring in $M'$.  
\begin{definition}\label{def:repMM}
Let $M=(Q,q_0,V,D,G,E,h)$ and $M'=(Q',q_0',V,D,G',E',h')$ be two deterministic and completely specified SFSMs.
A finite input set $\Sigma_I \subseteq \DI$ is \emph{representative for $(M,M')$} if  
\begin{enumerate}
\item The guard-overlap requirement holds for every $(g, g')\in G_h\times G_{h'}'$.
\item The output-separation requirement holds for all  $(g,e)\in A_M$ and $(g',e')\in A_{M'}$.
\end{enumerate}
\end{definition}

% \begin{lemma}\label{rm:rep}
% Let $M=(Q,q_0,V,D,G,E,h)$ and $M'=(Q',q_0',V,D,G',E',h')$ be two deterministic and completely specified SFSMs. 
% Let $E_{h'}'$ be the set of output assignments occurring in $M'$.
% Let $G''$ and $E''$ be finite sets  of guards and output assignments, respectively, such that
% \[
% G_h\cup G_{h'}'\subseteq G'',\qquad E_h\cup E_{h'}'\subseteq E''.
% \]
% Every finite input set $\Sigma_I$ that is representative for $(G'',E'')$ or for $(M,G',E')$ is also representative for $(M,M')$. Thus,
% the minimum cardinality of a representative set for $(M,M')$ is no greater than that for $(M,G',E')$, and it may be strictly smaller.  
% \end{lemma}

\begin{lemma}\label{lemma:rep}
 Let $M$ be a deterministic and completely specified SFSM, and let $G$, $E$ be finite sets
of guards and output assignments over the same variables and domains as $M$. If
$\Sigma_I \subseteq D_I$ is representative for $(M,G,E)$ in the sense of Definition~\ref{def:repMGE}, then $\Sigma_I$ is
representative for $(M,M')$ in the sense of Definition~\ref{def:repMM}, for every deterministic and completely
specified SFSM $M'$ whose transitions use only guards from $G$ and output assignments from $E$.
Thus, the minimum cardinality of a representative set for $(M,M')$ is no greater than that for $(M,G,E)$, and it may be strictly smaller.  
\end{lemma}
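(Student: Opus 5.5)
The plan is to unfold the definitions and check that each requirement of Definition~\ref{def:repMM} is one of the requirements of Definition~\ref{def:repMGE}. Let $M'=(Q',q_0',V,D,G',E',h')$ be deterministic and completely specified, with transitions using only guards from $G$ and output assignments from $E$. The first step is to record the two inclusions this hypothesis gives: $G'_{h'}\subseteq G$, and $A_{M'}\subseteq G\times E$. Both follow immediately from the definitions of $G_{h'}'$ and $A_{M'}$ as the guards and labels actually occurring in $h'$.

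For the guard-overlap requirement, I take $(g,g')\in G_h\times G'_{h'}$. Then $g\in G_h$ and $g'\in G$. Definition~\ref{def:repMGE}(1) applies to this pair, so $g\cap g'\neq\emptyset$ implies $g\cap g'\cap\Sigma_I\neq\emptyset$. The output-separation requirement works the same way. For $(g,e)\in A_M$ and $(g',e')\in A_{M'}\subseteq G\times E$, Definition~\ref{def:repMGE}(2) gives the needed separating input whenever $e$ and $e'$ are distinguishable on $g\cap g'$. Hence $\Sigma_I$ is representative for $(M,M')$.

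The cardinality statement then follows at once. A minimum representative set for $(M,G,E)$ is, by the first part, representative for $(M,M')$, so the minimum for $(M,M')$ is at most the minimum for $(M,G,E)$.

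The only real work is the clause ``may be strictly smaller'', which requires an example. I would use Example~\ref{ex:sfsm} with $M'=M$ and $G,E$ as in that example. Pair-specific representativeness for $(M,M)$ only concerns the overlaps among the guards of $G_h$ at the states where they occur, with separation on those overlaps. A simpler route is to take $M$ and $M'$ both trivial: one state, guard $\ttt$, the same assignment $e$. Then $\{a\}$ for any single $a$ suffices for $(M,M')$. Now let $G=\{\ttt,\, x\le 0,\, x>0\}$. These guards contain the disjoint nonempty sets $x\le0$ and $x>0$, so any representative set for $(M,G,E)$ needs at least two inputs. Thus $M'$ uses only guards from $G$, yet the minima are $1$ versus at least $2$. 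Choosing the example so that $M'$ genuinely has labels in $G\times E$ while $G$ contains labels unused by $M'$ is the main point to get right; otherwise the argument is routine.
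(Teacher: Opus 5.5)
Your proof is correct and follows the paper's argument: the inclusions $G'_{h'}\subseteq G$ and $A_{M'}\subseteq G\times E$ make every guard-overlap and output-separation requirement of Definition~\ref{def:repMM} an instance of a requirement of Definition~\ref{def:repMGE}, and the cardinality comparison then follows at once. Your one-state example with $G=\{\ttt,\,x\le 0,\,x>0\}$ and any $E$ containing $e$ correctly establishes the ``strictly smaller'' clause (minimum $1$ versus at least $2$), which the paper only asserts without an example.
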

\begin{proof}
Let $M'$ satisfy the stated conditions. Then $G'_{h'} \subseteq G$, and $A_{M'} \subseteq G \times E$. Each guard-overlap and output-separation requirement of Definition~\ref{def:repMM} is therefore a requirement of Definition~\ref{def:repMGE}. Hence, $\Sigma_I$ is a representative set for $(M,M')$.

% Let $(g,g') \in G_h \times G'_{h'}$ with $g \cap g' \neq \emptyset$. Since
% $g \in G_h$ and $g' \in G$, $(g,g') \in G_h \times G$, by the guard-overlap requirement of Definition~\ref{def:repMGE}, 
% $g \cap g' \cap\Sigma_I \neq \emptyset$.

% Let $(g,e) \in A_M$, $(g',e') \in A_{M'}$ and assume $e,e'$
% distinguishable on $g \cap g'$. As $(g',e') \in G \times E$, by the output-separation requirement of condition~2 of Definition~\ref{def:repMGE}, there exists some 
% $a \in g \cap g' \cap \Sigma_I$ with $e(a) \neq e'(a)$.
% Both requirements of Definition~\ref{def:repMM} hold.   
\end{proof}
 
\begin{remark}\label{rm:rep}
Let $G$, $E$ be finite sets
of guards and output assignments. 
Let $M$, $M'$ be two deterministic and completely specified SFSMs whose transitions use only guards from $G$ and output assignments from $E$. If
$\Sigma_I \subseteq D_I$ is representative for $(G,E)$ in the sense of Definition~\ref{def:repGE}, then $\Sigma_I$ is
representative for $(M,G,E)$ and $(M,M')$. 

The first implication follows because $G_h\subseteq G$ and $A_M\subseteq G\times E$; the second follows from Lemma~\ref{lemma:rep}.
\end{remark}

% ============================================================================ 
\section{Model-based testing}\label{sec:mb}
In this section, we study language equivalence between two SFSMs over the same concrete input domain $\DI$, using the same sets of input and output variables. Let $\Sigma_I\subseteq\DI$ be a set of concrete inputs. Two SFSMs $M=(Q,q_0,V,D,G,E,h)$ and $M'=(Q',q_0',V,D,G',E',h')$ are \emph{$\Sigma_I^*$-equivalent} if, for every input sequence $\alpha\in\Sigma_I^*$, both machines produce the same concrete output sequence. They are $\DI^*$-equivalent  when their languages coincide. We now give sufficient conditions under which equivalence on $\Sigma_I^*$ already implies equivalence on $\DI^*$.

\paragraph{Fault domains.}
The number of distinguishable reachable states of a machine is the number of language-equivalence classes of its reachable states. 
For DFSM testing with finite input alphabet $\Sigma_I$ and output alphabet $\Sigma_O$, a \emph{fault domain} ${\cal D}_{\mathrm{DFSM}}(m,\Sigma_I,\Sigma_O)$ consists of  all DFSMs over $\Sigma_I, \Sigma_O$ with at most $m$ distinguishable reachable states. Analogously, ${\cal D}_{\mathrm{SFSM}}(m,G,E)$ consists of all deterministic and completely specified SFSMs over the fixed variables and domains whose transitions use guards from $G$ and output assignments from $E$, and which have at most $m$ distinguishable reachable states. 
% We assume that there exists some SFSM $M'\in {\cal D}_{\mathrm{SFSM}}(m,G,E)$ representing the observable behaviour of the SUT ${\mathcal I}$, that is 
%  $\Lang({\mathcal I})=\Lang(M')$.

\paragraph{Test cases and test suites.}
We assume that the SUT can be reset reliably to its initial state before each test case. 

A \emph{test case} is a finite concrete input sequence $\alpha\in\DI^*$. Let $M$ be the reference model and let $M'$ represent the (initially unknown) observable behaviour of the SUT ${\mathcal I}$. The SUT \emph{passes} test case $\alpha$ if
\[
\out(q_0',\alpha)=\out(q_0,\alpha),
\]
where $q_0$ and $q_0'$ are the initial states of $M$ and $M'$, respectively.

A test suite $\TS$ is a finite set of test cases. The SUT passes $\TS$ if it passes every test case in this suite. 
In this paper, conformance means language equivalence.
A test suite $\TS$ is complete for ${\mathcal I}$ with respect to $M$ if 
\[{\mathcal I}\, \mathrm{passes}\, \TS \iff \Lang({\mathcal I})=\Lang(M).\]
It is complete for a fault domain ${\mathcal D}$ with respect to $M$ if it is complete for every SUT
 whose observable language coincides with that of some
model in ${\mathcal D}$. 

\begin{theorem}\label{th:main}
Let $M$ and $M'$ be deterministic, completely specified SFSMs over the same variables. Suppose that $\Sigma_I\subseteq \DI$ is representative for $(M,M')$.
Then, if $M$ and $M'$ are $\Sigma_I^*$-equivalent, they are  also $\DI^*$-equivalent. Hence $\Lang(M)=\Lang(M')$.
\end{theorem}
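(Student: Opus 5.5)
We prove the contrapositive. Assume $\Lang(M)\neq\Lang(M')$ and construct an input sequence in $\Sigma_I^*$ on which the two machines produce different outputs. Since both machines are deterministic and completely specified, every concrete input sequence $\alpha\in\DI^*$ has exactly one output sequence in each machine. Hence language inequality gives some $\alpha=a_1\dots a_k\in\DI^*$ with $\out(q_0,\alpha)\neq\out(q_0',\alpha)$. We choose such an $\alpha$ of minimal length $k$. Then the two output sequences agree on $a_1\dots a_{k-1}$ and differ only at the last input $a_k$.

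Next we replace each concrete input by a representative while tracking the same symbolic path in both machines. Let $q_{i-1}\xrightarrow{g_i/e_i}q_i$ in $M$ and $q_{i-1}'\xrightarrow{g_i'/e_i'}q_i'$ in $M'$ be the unique transitions taken on $a_i$. Then $a_i\in g_i\cap g_i'$, so this overlap is non-empty, and $g_i\in G_h$, $g_i'\in G'_{h'}$. By the guard-overlap requirement of Definition~\ref{def:repMM}, there is some $b_i\in g_i\cap g_i'\cap\Sigma_I$ for $i<k$. By determinism, $b_i$ triggers exactly the same transitions $(q_{i-1},g_i,e_i,q_i)$ and $(q_{i-1}',g_i',e_i',q_i')$. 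Therefore $\beta=b_1\dots b_{k-1}\in\Sigma_I^*$ leads $M$ to $q_{k-1}$ and $M'$ to $q_{k-1}'$, the same state pair that $a_1\dots a_{k-1}$ reaches.

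For the last step, the transitions from $(q_{k-1},q_{k-1}')$ on $a_k$ carry labels $(g_k,e_k)\in A_M$ and $(g_k',e_k')\in A_{M'}$, with $a_k\in g_k\cap g_k'$ and $e_k(a_k)\neq e_k'(a_k)$. So $e_k$ and $e_k'$ are distinguishable on $g_k\cap g_k'$. The output-separation requirement of Definition~\ref{def:repMM} then yields $b_k\in g_k\cap g_k'\cap\Sigma_I$ with $e_k(b_k)\neq e_k'(b_k)$. Again by determinism, $b_k$ triggers these same two transitions. Hence $\beta b_k\in\Sigma_I^*$ produces different outputs in $M$ and $M'$, contradicting $\Sigma_I^*$-equivalence.

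Minimality of $\alpha$ is not actually needed. It is enough to replace the prefix up to the first position where the outputs differ. Outputs on the prefix $\beta$ need not coincide with those on $a_1\dots a_{k-1}$, but they don't have to: a disagreement anywhere on $\beta b_k$ already contradicts $\Sigma_I^*$-equivalence.

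The main point needing care is that the choice of representatives preserves the symbolic transitions in both machines simultaneously. This is exactly why the overlap witnesses must come from $g\cap g'$ rather than from each guard separately. The justification uses determinism: within a state, a concrete input lies in exactly one outgoing guard, so any $b\in g_i\cap g_i'$ selects the same pair of transitions as $a_i$. The rest is routine bookkeeping.
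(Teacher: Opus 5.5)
Your proof is correct and follows the paper's argument. You replace each concrete input by a witness in $g_i\cap g_i'\cap\Sigma_I$, which by determinism triggers the same pair of symbolic transitions, and use output separation to transfer an output disagreement into $\Sigma_I^*$. The only difference is presentation: you argue by contraposition and need a separating witness only at the first disagreeing position, whereas the paper argues directly, picking separating witnesses at every position and concluding that $e_i$ and $e_i'$ agree on each overlap $g_i\cap g_i'$.
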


\begin{proof}
Assume that $M$ and $M'$ are $\Sigma_I^*$-equivalent. Let $\alpha=x_1\cdots x_k\in\DI^*$ be any input sequence. In $M$, the sequence $\alpha$ induces a uniquely defined symbolic guard sequence $g_1\cdots g_k$ and uniquely defined output assignments $e_1\cdots e_k$. In $M'$, it induces a  symbolic guard sequence $g_1'\cdots g_k'$ and output assignments $e_1'\cdots e_k'$ (also uniquely determined). Since $x_i\in g_i\cap g_i'$ for each $i$, all these intersections are non-empty.

By the assumed condition, for each $i$ we may choose $a_i\in g_i\cap g_i'\cap\Sigma_I$ such that $a_i$ separates $e_i$ and $e_i'$ whenever these assignments are distinguishable on $g_i\cap g_i'$. Let $\beta=a_1\cdots a_k\in\Sigma_I^*$. The sequence $\beta$ follows the same   guard sequences as $\alpha$ in both machines. Since the machines are $\Sigma_I^*$-equivalent, their output sequences on $\beta$ coincide. Thus $e_i(a_i)=e_i'(a_i)$ for every $i$. By the choice of $a_i$, the assignments $e_i$ and $e_i'$ cannot be distinguishable on $g_i\cap g_i'$. Since $x_i\in g_i\cap g_i'$, it follows that $e_i(x_i)=e_i'(x_i)$ for every $i$. Therefore the output sequences on $\alpha$ coincide. As $\alpha$ was arbitrary, the machines are $\DI^*$-equivalent.
\end{proof}

% \begin{remark}
% Under the assumptions of Theorem~\ref{th:main}, a reduced SFSM remains reduced after finite instantiation: (i) The guard-overlap requirement ensures that each reachable state under restriction to $\Sigma_I$ is still reachable. (2) If two states were equivalent in $M|_{\Sigma_I}$, Theorem~\ref{th:main} applied two copies of $M$ with these states as initial states then the same representative-set argument would imply their equivalence in $M$.
% \end{remark}

% \begin{corollary}\label{cor:mbmain1}
% Let $M=(Q,q_0,V,D,G,E,h)$ and $M'=(Q',q_0',V,D,G',E',h')$ be deterministic, completely specified SFSMs over the same variables. Suppose that $\Sigma_I\subseteq \DI$ is representative for $(G\cup G', E\cup E')$ or $(M,G\cup G', E\cup E')$.  Then, if 
% $M$ and $M'$ are $\Sigma_I^*$-equivalent, they are $\DI^*$-equivalent. 
% \end{corollary}
% \begin{proof}
%      Since $\Sigma_I$ is representative for $(G\cup G', E\cup E')$ or $(M,G\cup G', E\cup E')$, it is also representive for $(M,M')$, Theorem~\ref{th:main}  implies $\Lang(M)=\Lang(M')$. 
% \end{proof}

Theorem~\ref{th:main} allows complete DFSM test suites to be used for SFSMs. 
% If the chosen DFSM testing method requires a minimal reference model, we first minimise $M|_{\Sigma_I}$. The state count used by the testing method is the number of states of this minimal DFSM.
\begin{corollary}\label{cor:mbmain}
Let $M$ and $M'$ be deterministic, completely specified SFSMs over the same variables. Suppose that $\Sigma_I\subseteq \DI$ is representative for $(M,M')$. Define
$\Sigma_O=\{e(a)\mid a\in\Sigma_I,\ e\in E\cup E'\}$. Furthermore, suppose that $M'|_{\Sigma_I}\in {\cal D}_{\mathrm{DFSM}}(m,\Sigma_I,\Sigma_O)$. Let $\TS\subseteq \Sigma_I^*$ be a complete test suite for the DFSM fault domain
${\cal D}_{\mathrm{DFSM}}(m,\Sigma_I,\Sigma_O)$ with respect to  $M|_{\Sigma_I}$. Then 
\[M'\, \mathrm{passes}\, \TS \iff \Lang(M')=\Lang(M).\]
\end{corollary}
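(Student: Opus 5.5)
We prove the two directions separately. Throughout we use the identity $\Lang(M|_{\Sigma_I})=\Lang(M)\cap(\Sigma_I\times\DO)^*$, and likewise for $M'$, noted after the definition of finite instantiations.

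For ($\Leftarrow$), assume $\Lang(M')=\Lang(M)$. Every test case $\alpha\in\TS$ lies in $\Sigma_I^*\subseteq\DI^*$. Since both machines are deterministic and completely specified, the output sequence $\out(q_0,\alpha)$ is the unique trace of $\Lang(M)$ with input sequence $\alpha$. Likewise, $\out(q_0',\alpha)$ is the unique trace of $\Lang(M')$ with input sequence $\alpha$. Equal languages therefore give equal outputs, so $M'$ passes $\TS$. This direction uses neither representativeness nor the fault-domain hypothesis.

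For ($\Rightarrow$), assume $M'$ passes $\TS$.

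\textbf{Step 1: transfer to the finite instantiations.} For $\alpha\in\Sigma_I^*$, the output of $M|_{\Sigma_I}$ on $\alpha$ equals $\out(q_0,\alpha)$ in $M$, and the same holds for $M'$. Hence $M'|_{\Sigma_I}$ passes $\TS$ as a DFSM test suite with respect to $M|_{\Sigma_I}$.

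\textbf{Step 2: apply completeness of $\TS$.} We check that both DFSMs are over the alphabets $\Sigma_I,\Sigma_O$. Every output value of $M|_{\Sigma_I}$ or $M'|_{\Sigma_I}$ has the form $e(a)$ with $a\in\Sigma_I$ and $e\in E_h\cup E'_{h'}\subseteq E\cup E'$, so it lies in $\Sigma_O$. By hypothesis $M'|_{\Sigma_I}\in{\cal D}_{\mathrm{DFSM}}(m,\Sigma_I,\Sigma_O)$, and the SUT represented by $M'|_{\Sigma_I}$ passes $\TS$. Completeness of $\TS$ then yields $\Lang(M'|_{\Sigma_I})=\Lang(M|_{\Sigma_I})$. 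Because both instantiations are deterministic and completely specified, this is equivalent to equal outputs on every $\alpha\in\Sigma_I^*$, that is, $\Sigma_I^*$-equivalence of $M$ and $M'$.

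\textbf{Step 3: lift to the full domain.} Since $\Sigma_I$ is representative for $(M,M')$, Theorem~\ref{th:main} gives $\Lang(M)=\Lang(M')$.

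The main point requiring care is Step 2. We must make sure that the fault-domain membership and completeness notion for DFSMs apply exactly to $M'|_{\Sigma_I}$ as the SUT. In particular, the output alphabet $\Sigma_O$ must contain all outputs of both instantiations, which is why $\Sigma_O$ is built from $E\cup E'$ rather than from the labels of $M$ only. Once this bookkeeping is settled, the remaining argument is a direct composition of the language identity for finite instantiations with Theorem~\ref{th:main}.
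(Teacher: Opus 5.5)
Your proposal is correct and follows the same route as the paper. Passing the complete suite $\TS$ yields $\Sigma_I^*$-equivalence of $M$ and $M'$, Theorem~\ref{th:main} lifts this to $\Lang(M')=\Lang(M)$, and the converse is immediate from equal languages; you additionally spell out the bookkeeping the paper leaves implicit, namely treating $M'|_{\Sigma_I}$ as the DFSM under test and checking that both instantiations have outputs in $\Sigma_O$.
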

\begin{proof}
    Passing the complete test suite $\TS$ means that  $M$ and $M'$ are $\Sigma_I^*$-equivalent. Since $\Sigma_I$ is representative for $(M,M')$, Theorem~\ref{th:main}  implies $\Lang(M')=\Lang(M)$. 
    Conversely, if $\Lang(M')=\Lang(M)$, then the machines
agree on every input sequence, so $M'$ passes
the test suite.
\end{proof}

\begin{theorem}\label{th:mbmain}
Let $M$ be an SFSM with $n$ distinguishable reachable states, and let $n\le m$.  Let $G_h$ and $E_h$ be the sets of guards and output assignments occurring in $M$, respectively. Let $G$ and $E$ be finite sets of guards and output assignments, respectively, over the same variables and domains as $M$.
Let $\Sigma_I\subseteq D_I$
be a finite concrete input set representative for $(M,G,E)$. Define
$\Sigma_O=\{e(a)\mid a\in\Sigma_I,\ e\in E\cup E_h\}$.
Let $\TS\subseteq \Sigma_I^*$ be a complete DFSM test suite for the DFSM fault domain
${\cal D}_{\mathrm{DFSM}}(m,\Sigma_I,\Sigma_O)$, with respect to $M|_{\Sigma_I}$. 
Then
\[\forall M'\in {\cal D}_{\mathrm{SFSM}}(m,G,E)\colon M'\, \mathrm{passes}\, \TS \iff \Lang(M')=\Lang(M).\]
\end{theorem}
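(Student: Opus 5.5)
The plan is to reduce Theorem~\ref{th:mbmain} to Corollary~\ref{cor:mbmain} via Lemma~\ref{lemma:rep}. Fix an arbitrary $M'\in{\cal D}_{\mathrm{SFSM}}(m,G,E)$. Since $M'$ is deterministic, completely specified and uses only guards from $G$ and output assignments from $E$, Lemma~\ref{lemma:rep} shows that $\Sigma_I$, being representative for $(M,G,E)$, is representative for $(M,M')$. This settles the representativeness hypothesis of Corollary~\ref{cor:mbmain}. What remains is to check that $M'|_{\Sigma_I}$ lies in the DFSM fault domain ${\cal D}_{\mathrm{DFSM}}(m,\Sigma_I,\Sigma_O)$ with the $\Sigma_O$ defined here, and to reconcile the output alphabets.

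\emph{Output alphabet.} The output alphabet of $M'|_{\Sigma_I}$ consists of values $e'(a)$ with $a\in\Sigma_I$ and $e'$ occurring in $M'$, so $e'\in E$. The output alphabet of $M|_{\Sigma_I}$ consists of values $e(a)$ with $e\in E_h$. Both are therefore contained in $\Sigma_O=\{e(a)\mid a\in\Sigma_I,\ e\in E\cup E_h\}$. The $\Sigma_O$ of Corollary~\ref{cor:mbmain}, written there with $E\cup E'$, is meant as the assignments used by the two machines, and it is also covered by this $\Sigma_O$. Enlarging the output alphabet to a superset does not change languages or completeness arguments, so both instantiations may be regarded as DFSMs over $(\Sigma_I,\Sigma_O)$.

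\emph{State bound.} This is the step needing care. We must show that $M'|_{\Sigma_I}$ has at most $m$ distinguishable reachable states. Reachable states of $M'|_{\Sigma_I}$ are reached by sequences in $\Sigma_I^*\subseteq\DI^*$, so they are reachable in $M'$. It then suffices to show that states of $M'$ that are equivalent in $M'$ remain equivalent in $M'|_{\Sigma_I}$. This holds because $\Lang(M'|_{\Sigma_I})$ from a state is exactly the restriction of that state's language to inputs from $\Sigma_I$. Hence the equivalence relation on $M'|_{\Sigma_I}$ is coarser, and the number of classes among its reachable states is at most the number among the reachable states of $M'$, which is at most $m$.

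One caveat needs attention: whether ${\cal D}_{\mathrm{DFSM}}$ counts distinguishable reachable states. By the definition given, it does, so the coarsening argument suffices and no reducedness of the instantiation is required. Similarly, $M|_{\Sigma_I}$ has at most $n\le m$ such classes, so it is a legitimate reference model for the complete DFSM suite. With both hypotheses verified, Corollary~\ref{cor:mbmain} gives $M'$ passes $\TS\iff\Lang(M')=\Lang(M)$, and since $M'$ was arbitrary, the universally quantified statement follows.
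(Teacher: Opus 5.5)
Your proposal is correct and follows essentially the paper's proof: Lemma~\ref{lemma:rep} gives representativeness for $(M,M')$; the coarsening argument shows that $M'|_{\Sigma_I}$ has at most $m$ distinguishable reachable states with outputs in $\Sigma_O$; and completeness of $\TS$ combined with Theorem~\ref{th:main} yields the equivalence. The only difference is cosmetic: you route the last step through Corollary~\ref{cor:mbmain} and reconcile its $\Sigma_O$, whereas the paper invokes completeness and Theorem~\ref{th:main} directly.
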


\begin{proof}
Let $M'\in {\cal D}_{\mathrm{SFSM}}(m,G,E)$. By Lemma~\ref{lemma:rep}, $\Sigma_I$ is representative for $(M,M')$. Restricting the input alphabet cannot split a language-equivalence class of reachable states, and every state reachable in the restriction is reachable in the original machine. Thus, the finite
instantiation ${M'}|_{\Sigma_I}$ has at most $m$ distinguishable reachable states. Its outputs belong to $\Sigma_O$, so ${M'}|_{\Sigma_I}\in {\cal D}_{\mathrm{DFSM}}(m,\Sigma_I,\Sigma_O)$.
Since $\TS$ is complete for ${\mathrm{DFSM}}(m,\Sigma_I,\Sigma_O)$ with respect to ${M}|_{\Sigma_I}$ and Theorem~\ref{th:main} holds for $(M,M')$, 
\[
 M'\, \mathrm{ passes }\, \TS 
\Longleftrightarrow \Lang({M'}|_{\Sigma_I})=\Lang(M|_{\Sigma_I})
\Longleftrightarrow 
\Lang(M')=\Lang(M).
\]
% \[
% \begin{aligned} M'\, \mathrm{ passes }\, \TS 
% &\Longleftrightarrow \Lang({M'}|_{\Sigma_I})=\Lang(M|_{\Sigma_I})\\
% &\Longleftrightarrow 
% \Lang(M')=\Lang(M).
% \end{aligned}
% \]

\end{proof}

% ======================================================================
\section{Model learning}\label{sec:learning}
We now consider how to learn an SFSM that represents the observable
behaviour of an SUT $\mathcal I$.

\subsection{Assumptions and overall strategy}\label{ssec:learning_assumption}
Assume that the SUT $\mathcal{I}$\footnote{We continue to use the term
\emph{System Under Test (SUT)}, although the term
\emph{System Under Learning (SUL)} is preferred by some authors.}  has an unknown deterministic and completely specified SFSM representation $M$ with $\Lang(M)=\Lang(\mathcal I)$. Static analysis provides a finite set $G'$ of branching
conditions over the input variables and a finite set $E$ containing all output assignments used by $M$. We assume that the input partition $\olg$ induced by $G'$ refines every guard occurring in $M$. Splitting these guards into classes from $\olg$ gives an SFSM
$\ol M$ over $(\olg, E)$ with the same observable behaviour.
Neither $M$ nor $\ol M$ needs to be known to the learner.

We also assume an upper bound $m$ on the number of
distinguishable reachable states of $M$. Such a bound may be obtained through
static analysis and abstract interpretation. The SUT must support a reliable reset to its initial state before each membership query and test execution. 

We first construct a finite representative input set $\Sigma_I$ for $(\olg, E)$. An $L^*$-style DFSM
learning algorithm then learns the finite instantiation $\ol M|_{\Sigma_I}$ by interacting with the SUT. Equivalence queries are answered by complete DFSM testing using the bound $m$. The final minimal DFSM hypothesis
$H_{\Sigma_I}$ is lifted to an SFSM hypothesis $H$. Theorem~\ref{th:learning} shows that $H$ and the SUT are language equivalent over the full input domain.

\subsection{Input equivalence classes}
The set $G'$ contains the branching conditions as they occur in the source code. For example, the statement
\begin{equation}\label{eq:codea}
\texttt{if}\ g_1\ \texttt{then}\ B_1\ \texttt{else if}\ g_2\ \texttt{then}\ B_2\ \texttt{else}\ B_3;
\end{equation}
contributes $g_1$ and $g_2$ to $G'$. Its three branches have guards $g_1, \neg g_1\wedge g_2, \neg g_1\wedge \neg g_2$. The negated conditions need not be collected separately.

 Two inputs are equivalent if they satisfy the same branching conditions:
\[
a\sim_{G'}b
\quad\Longleftrightarrow\quad
\forall g\in G'\colon
\bigl(a\in g \Longleftrightarrow b\in g\bigr).
\]
For each $U\subseteq G'$, define  
\begin{equation}\label{eq:classphi}
    \varphi_U = \bigwedge_{g\in U} g \wedge \bigwedge_{g\in G'\setminus U} \neg g.
\end{equation}
The non-empty sets $\varphi_U$ are the \emph{input equivalence classes}. Thus
\begin{equation}
\olg =\DI/_{\sim_{G'}}= \{ \varphi_U~|~U\subseteq G' \wedge \varphi_U\neq\varnothing \}
\end{equation}
is a finite partition of the input domain $\DI$. 

The refinement assumption from Section~\ref{ssec:learning_assumption} means that every guard occurring in $M$ is a union of classes from $\olg$:
\begin{equation}\label{eq:classref}
\forall g\in G_h\;\exists B_g\subseteq\olg\colon 
g=\bigcup_{\varphi\in B_g}\varphi.
\end{equation}

Define the refined machine by 
 \begin{equation}\label{eq:olM}
    \ol M =  (Q,q_0,V,D,\ol G,E,\ol h),\quad  \ol h = \{ (q_1,\varphi,e,q_2)\in Q\times \olg\times E\times Q~|~\exists g\in G_h, \varphi\subseteq g\wedge (q_1,g,e,q_2)\in h \}.
\end{equation}
Each transition of $M$ is split into one transition for each input class contained in its guard. The output assignment and the target state are unchanged.

\begin{lemma}\label{lemma:refine}
The SFSM $\ol M$ is deterministic and completely specified, and $\Lang(\ol M) = \Lang(M)$.
\end{lemma}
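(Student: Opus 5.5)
The plan is to reduce everything to a local statement about single states: for every state $q\in Q$ and every concrete input $a\in\DI$, $\ol M$ has exactly one transition from $q$ whose guard contains $a$, and it carries the same output assignment and target state as the unique transition of $M$ from $q$ enabled by $a$. Determinism, complete specification and language equality all follow from this.

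\emph{Existence.} Since $M$ is completely specified, there is a transition $(q,g,e,q')\in h$ with $a\in g$. Because $\olg$ partitions $\DI$, there is a unique class $\varphi\in\olg$ with $a\in\varphi$. The refinement assumption~\eqref{eq:classref} writes $g$ as $\bigcup_{\psi\in B_g}\psi$, and since $a$ lies in exactly one class, $\varphi\in B_g$, so $\varphi\subseteq g$. By~\eqref{eq:olM}, $(q,\varphi,e,q')\in\ol h$. This gives complete specification of $\ol M$.

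\emph{Uniqueness.} Let $(q,\varphi_i,e_i,q_i')\in\ol h$, $i=1,2$, with $a\in\varphi_1\cap\varphi_2$. Since $\olg$ is a partition, $\varphi_1=\varphi_2=:\varphi$. By~\eqref{eq:olM} there are $g_i\in G_h$ with $\varphi\subseteq g_i$ and $(q,g_i,e_i,q_i')\in h$. Then $a\in g_1\cap g_2$, and determinism of $M$ yields $g_1=g_2$, $e_1=e_2$ and $q_1'=q_2'$. Hence $\ol M$ is deterministic. The same argument also shows that the $\ol M$-transition enabled by $a$ at $q$ has the same output assignment and target as the $M$-transition enabled by $a$. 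This is the only mildly delicate step: determinism must be checked on labels, not merely on guards, because a class $\varphi$ might a priori lie in two different guards of $G(q)$.

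\emph{Language equality.} By induction on the length of $\alpha=a_1\cdots a_k\in\DI^*$, run $\alpha$ from $q_0$ in both machines. At each step the local statement gives the same successor state and the same output $e(a_i)$. Hence $\out_{\ol M}(q_0,\alpha)=\out_M(q_0,\alpha)$ for every $\alpha$. Since both machines are deterministic and completely specified, each language is exactly $\{\alpha/\out(q_0,\alpha)\mid\alpha\in\DI^*\}$, so $\Lang(\ol M)=\Lang(M)$.

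Finally, one checks the convention that transitions with unsatisfiable guards are omitted. This holds because every $\varphi\in\olg$ is non-empty by definition.
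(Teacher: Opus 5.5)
Your proposal is correct and follows the paper's proof essentially step for step. You obtain complete specification from the unique class $\varphi$ containing $a$, with the refinement assumption giving $\varphi\subseteq g$; determinism by pulling two enabled transitions of $\ol M$ back to transitions of $M$ enabled by $a$; and language equality by induction on input length, with your explicit derivation of $\varphi\in B_g$ and the label-level determinism check only spelling out steps the paper states in one line.
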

\begin{proof}
    Let $q\in Q$ and $a\in \DI$. Let $(q,g,e,q')\in h$ be the unique transition of $M $ enabled by $a$. Let $\varphi\in \olg$ be the unique input class containing $a$. By the refinement assumption~\eqref{eq:classref}, $\varphi\subseteq g$. Hence, by~\eqref{eq:olM}, $(q,\varphi, e, q')\in \ol h$ is enabled by $a$. 

    Any other transition of $\ol M$ enabled by $a$ in $q$ must use the same class $\varphi$ and arise from a transition of $M$ enabled by $a$. Since $M$ is deterministic, the output assignment and the target state must be $e$ and $q'$. This proves that $\ol M$ is deterministic.

    For every state and input, $M$ and $\ol M$ use the same output assignment and reach the same target state. Hence, by induction on the length of a given input sequence, it follows that $\Lang(M)=\Lang(\ol M)$.
\end{proof}
\subsection{Learning the finite instantiation}
% To learn the unknown  DFSM elements $(Q,q_0,h|_{\Sigma_I})$ of $M|_{\Sigma_I}$, any   $L^*$-style learning algorithm for DFSMs can be used, such as, for example, the $L^\#$ algorithm with its associated data structures supporting DFSM learning~\cite{DBLP:conf/tacas/VaandragerGRW22}. Regardless of the concrete algorithm, the learning process will come up with so-called  \emph{hypothesis DFSM models} $M|_{\Sigma_I}^{hyp}$ at certain stages of the process. To check whether a hypothesis $M|_{\Sigma_I}^{hyp}$ is really equivalent to $M|_{\Sigma_I}$, any complete testing method for DFSMs can be used to create a finite test suite with test cases from $\Sigma_I^*$ from hypothetical reference model $M|_{\Sigma_I}^{hyp}$ and run this against the SUT whose behaviour is equivalent to $M|_{\Sigma_I}$ when exercised with inputs from $\Sigma_I$ only. Complete DFSM testing methods use the upper state bound $m$ and the concrete number of (distinguishable) states in hypothesis 
% $M|_{\Sigma_I}^{hyp}$ to create test cases of appropriate length. If the generated test suite is passed by the SUT, language equivalence between $M|_{\Sigma_I}^{hyp}$ and $M|_{\Sigma_I}$ is guaranteed, provided that $m$ really is an upper bound for the number of control states in $M|_{\Sigma_I}$. 

Let $\Sigma_I$ be representative for $(\olg, E)$. Section~\ref{ssec:eq} gives an algorithm for constructing such a set. 
Any suitable $L^*$-style active learning algorithm
for DFSMs can be used to learn the SUT's behaviour over $\Sigma_I$. One example is the $L^\#$-algorithm~\cite{DBLP:conf/tacas/VaandragerGRW22}. Membership
queries are executed against the SUT using input from $\Sigma_I$. The learner constructs successive DFSM hypotheses, denoted by $H_{\Sigma_I}$.

% To learn a DFSM hypothesis $H_{\Sigma_I}$ whose behaviour coincides with
% that of the SUT $\mathcal I$ on input sequences from $\Sigma_I^*$, any suitable $L^*$-style active learning algorithm
% for DFSMs may be used under the assumption that reliable reset access to the initial state between membership
% queries and test executions. One example is the $L^\#$ algorithm together
% with its associated data structures for DFSM
% learning~\cite{DBLP:conf/tacas/VaandragerGRW22}. Regardless of the
% particular algorithm, the learner constructs successive hypothesis
% DFSMs, denoted by $H_{\Sigma_I}$, at different stages of the learning process.

\paragraph{Answering equivalence queries and the W-method optimisation.}
To determine whether a DFSM hypothesis $H_{\Sigma_I}$ is $\Sigma_I^*$-equivalent to
$\mathcal I$, a complete conformance-testing method for DFSMs, for example the W-method~\cite{vasilevskii1973,chow:wmethod}, can
be applied. Such a method takes the current hypothesis as the reference model and generates
a finite test suite $\TS\subseteq\Sigma_I^*$. 

The upper bound $m$ is also an upper bound on the number of states of a minimal DFSM equivalent to $\ol M|_{\Sigma_I}$: states equivalent over $\DI$ remain equivalent when inputs are restricted to $\Sigma_I$.

The test suite is executed against the SUT. A failed test supplies a counterexample for the learner. Under the state bound and the assumptions of the testing method, passing the test suite means
\[\Lang(H_{\Sigma_I})=\Lang(\ol M|_{\Sigma_I}).\]

\paragraph{Reducing the W-method traversal alphabet.}
Inputs in the same class $\varphi$ trigger the same symbolic transition in $\ol M$. We can therefore use one input per class in the traversal part of the W-method, while retaining the full alphabet $\Sigma_I$ for transition checks. 

The guard-overlap requirement ensures that for each input class $\varphi\in \olg$, $\varphi\cap \Sigma_I\neq \emptyset$. Select an input $a_\varphi\in \varphi\cap \Sigma_I$ for each $\varphi\in \olg$ and define a subset $\Sigma_I^{\mathrm{tr}}=\{a_\varphi\mid \varphi\in \olg\}\subseteq \Sigma_I$.

% \begin{proposition}   
% Let $H_{\Sigma_I}$ be a minimal
% DFSM hypothesis  with $n$ states, let $m\geq n$ be an upper bound on  the number of distinguishable reachable states of  $\ol M|_{\Sigma_I}$. Let $P$ be a state cover with $\varepsilon\in P$ and $W$ a non-empty characterisation set of $H_{\Sigma_I}$.
% Then, the test suite
% \[
% \TS_{\ol G}
% =
% P\cdot
% (\Sigma_I^{\mathrm{tr}})^{\leq m-n}
% \cdot
% \Sigma_I^{\leq 1}
% \cdot W
% \]
% is complete for  $\ol M|_{\Sigma_I}$.
% \end{proposition}
% Define the restricted DFSM fault domain
% \[ {\mathcal D}_{part}(m,\olg, E, \Sigma_I)=\{M'|_{\Sigma_I}\mid M'\in {\cal D}(m,G,E)\}.
% \]
\begin{proposition}   
Let $H_{\Sigma_I}$ be a minimal
DFSM hypothesis  with $n$ states, and $m\geq n$
be an upper bound on  the number 
of distinguishable reachable states of  $\ol M|_{\Sigma_I}$. 
Let $P$ be a state cover with $\varepsilon\in P$ and $W$ a non-empty characterisation set of $H_{\Sigma_I}$.
Define
\[
\TS_{\olg}
=
P\cdot
(\Sigma_I^{\mathrm{tr}})^{\leq m-n}
\cdot
\Sigma_I^{\leq 1}
\cdot W.
\]
Then, if $\ol M|_{\Sigma_I}$ passes $\TS_{\olg}$, $\Lang(H_{\Sigma_I})=\Lang (\ol M|_{\Sigma_I})$.
% is complete for ${\mathcal D}_{part}(m,\olg, E, \Sigma_I)$ with respect to $H_{\Sigma_I}$. In particular, passing the test suite establishes equivalence to $\ol M|_{\Sigma_I}$ under the learning assumption. 
\end{proposition}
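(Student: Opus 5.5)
We follow the standard W-method completeness argument, adapted to the restricted traversal alphabet. Write $N=\ol M|_{\Sigma_I}$ and $H=H_{\Sigma_I}$. Assume $N$ passes $\TS_{\olg}$ and suppose, for contradiction, that $\Lang(H)\neq\Lang(N)$.

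The key observation is that $N$ has a reachable-state structure coarser than $\Sigma_I^{\mathrm{tr}}$ suggests. Every $a\in\Sigma_I$ lies in a unique class $\varphi\in\olg$. By construction of $\ol M$ and Lemma~\ref{lemma:refine}, from any state $q$ the inputs $a$ and $a_\varphi$ trigger the same symbolic transition of $\ol M$. Hence they lead to the same target state, although they may produce different outputs. Consequently, the set of states of $N$ reachable from $q_0$ by sequences in $\Sigma_I^*$ equals the set reachable by sequences in $(\Sigma_I^{\mathrm{tr}})^*$. In particular, $N$ restricted to the traversal inputs still reaches all of its reachable states, of which at most $m$ are distinguishable.

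We then run the usual counting argument, working with states of $N$ modulo $\Sigma_I^*$-equivalence.
\begin{enumerate}
\item Since $N$ passes $P\cdot W$ and $W$ characterises $H$, the states $N$ reaches after the words of $P$ are pairwise distinguishable, giving $n$ classes. Each such state agrees with the corresponding $H$-state on $W$.
\item Let $\sigma$ be a shortest counterexample. Write it as $\sigma=u\,a\,v$, where $u\in P$ is the longest prefix in $P$ consistent with it. We want to replace the portion following $u$ by traversal inputs. For this, choose a counterexample that is minimal with respect to the sequence of \emph{states visited in $N$}. By the observation above, the prefix after $u$ can be replaced letter-by-letter by traversal inputs $a_\varphi$ without changing the states visited in $N$. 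The states visited in $H$ may change, so the replacement must be done carefully.
\item The cleaner route is to compare $N$ with the product $H\times N$ and argue by the standard ``$m-n$ extra states'' pigeonhole. Along any traversal path in $P\cdot(\Sigma_I^{\mathrm{tr}})^{\le m-n}$, either a new $N$-class appears at each step, or some reached $N$-state is $W$-equivalent to a state in $P$. In the latter case, the one-letter extension $\Sigma_I^{\le1}\cdot W$ checks every transition from it with the full alphabet. This establishes that the pair (state of $H$, state of $N$) reached by each word of $P\cdot\Sigma_I$ is again related, that is, has matching $W$-responses and hence corresponds to the same class. Induction on the length of words then shows that the relation ``the $N$-state reached by $w$ is $\Sigma_I^*$-equivalent to the $H$-state reached by $w$'' is preserved, which yields $\Lang(H)=\Lang(N)$.
\end{enumerate}

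The main obstacle is step 3. In the classical proof, the traversal explores all states of the implementation via arbitrary inputs, and there the $H$-state and $N$-state evolve together. With $\Sigma_I^{\mathrm{tr}}$, the $N$-state depends only on the class of each input, but the $H$-state may depend on the specific input. My plan is first to show that $H$ also passes this invariance: for $a,a'$ in the same class, $H$ reaches the same state. This would follow once $H$ is shown equivalent to $N$ on $P\cdot\Sigma_I\cdot W$, using the minimality of $H$. If this fails, I would instead define the traversal over pairs and bound the number of distinct $N$-states by $m$, accepting that $H$-states are tracked only through the transition checks $\Sigma_I^{\le1}\cdot W$.
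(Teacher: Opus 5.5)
Your key observation is correct and is the one the paper uses: in $\ol M|_{\Sigma_I}$, inputs of the same class $\varphi$ lead to the same successor, so replacing traversal letters by $a_\varphi$ does not change the state reached. Step~1 is also fine. The gap is in step~3, which is where the proof actually has to happen.

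The pigeonhole must be applied to the set of $N$-states reached at each depth, not along individual paths. Here $N$ should be taken modulo $\Sigma_I^*$-equivalence, i.e.\ the reachable minimal quotient, which has at most $m$ states. What the pigeonhole must deliver is that $L=P\cdot(\Sigma_I^{\mathrm{tr}})^{\le m-n}$ is a state cover of $N$, and step~3 never establishes this. The later inferences also fail. $W$ characterises $H$, not $N$, so matching $W$-responses do not show that an $N$-state ``corresponds to the same class'' as an $H$-state. Your closing induction on ``the $N$-state after $w$ is $\Sigma_I^*$-equivalent to the $H$-state after $w$'' is circular: for deterministic machines the inductive step is automatic, and the base case is the claim itself.

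The missing idea is to decouple the two machines. First, work with $N$ alone. Passing $P\cdot W$ gives at least $n$ distinct $N$-states after $P$. Let $R_k$ be the set of $N$-states reached by $P\cdot(\Sigma_I^{\mathrm{tr}})^{\le k}$. Then $R_{k+1}=R_0\cup\delta_N(R_k,\Sigma_I^{\mathrm{tr}})$, so either $R_k$ grows strictly, or it is closed under $\Sigma_I^{\mathrm{tr}}$. In the latter case it is closed under $\Sigma_I$ by class invariance, and hence contains every reachable state. So $L$ covers $N$, and trivially $P\subseteq L$ covers $H$.

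Second, apply the state-cover lemma the paper cites (Lemma~3.8 of Kruger, Junges and Rot). Define $f(\delta_N(w))=\delta_H(w)$ for $w\in L$. Passing $L\cdot W$ makes $f$ well defined, because $W$ characterises $H$. Passing $L\cdot\Sigma_I\cdot W$, together with $\delta_N(wa)=\delta_N(w')$ for some $w'\in L$, makes $f$ an output-preserving homomorphism mapping initial state to initial state. Hence $\Lang(H)=\Lang(N)$. Since $f$ is defined only through words of $L$, the $H$-state never has to track the input class along traversal words, so your ``main obstacle'' does not arise.

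Your first contingency is in fact provable. For $u\in P$ and $a,a'\in\varphi\cap\Sigma_I$, $N$ reaches the same state after $ua$ and $ua'$, so passing $ua\cdot W$ and $ua'\cdot W$ forces $\delta_H(ua)=\delta_H(ua')$. With class invariance in both machines, you could then show that $N$ passes the full classical W-suite and invoke its completeness. As written, however, the proposal leaves this as a conditional plan, not an argument.
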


\begin{proof}
Let $N$ be the reachable minimal quotient of $\ol M|_{\Sigma_I}$. Then $N$ has at most $m$ states and is language equivalent to $\ol M|_{\Sigma_I}$. Therefore, $\ol M|_{\Sigma_I}$ passes $\TS_{\olg }$ if and only if  $N$ passes $\TS_{\olg }$. Hence, it suffices to prove that $N$ passes $\TS_{\olg }$ implies $H_{\Sigma_I}$ and $N$ are language equivalent. 

Suppose $N$ passes $\TS_{\ol G}$. Since $P.W\subseteq \TS_{\olg}$ and $W$ distinguishes the states in $H_{\Sigma_I}$, the access sequences in $P$ reach at least $n$ distinct states in $N$ and $P.\Sigma_I^{\le m-n}$ is a state cover of $N$.  

Replace each input $a$ in the traversal part $\Sigma_I^{\le m-n}$ by $a_\varphi$, where $a\in \varphi$. Inputs in the same class reach the same successor in $\ol M|_{\Sigma_I}$, and this property is preserved in $N$. Hence $L=P.({\Sigma_I^{\mathrm{tr}}})^{\le m-n}$ is also a state cover of $N$. Since $P\subseteq L$, it is also a state cover of $H_{\Sigma_I}$.

Since $N$ passes $L.\Sigma_I^{\le 1}.W=\TS_{\olg}$ and $L$ is a state cover of both $H_{\Sigma_I}$ and $N$, Lemma~3.8 of ~\cite{DBLP:conf/tacas/KrugerJR24} implies that $H_{\Sigma_I}$ and $N$ are equivalent on $\Sigma_I^*$. 

% Conversely, suppose that $H_{\Sigma_I}$ and $N$ are equivalent on $\Sigma_I^*$. Then $N$ passes $\TS_{\olg}$.
\end{proof}
For a one-state hypothesis, we may take $W=\{\varepsilon\}$.

% This redueced test suite $\TS_{\olg}$ is complete for the restricted DFSM fault domain
% \[ {\mathcal D}_{part}(m,\olg, E, \Sigma_I)=\{M'|_{\Sigma_I}\mid M'\in {\cal D}(m,G,E)\}.
%  \]
We expect that this test-suite reduction can be adapted to variations of the W-method whose completeness arguments contain a traversal component used solely to reach possible additional SUT states. 

% However, the completeness of the resulting reduced test suite must be established for each such variant.
% This test-suite reduction can also be adapted to the H-method. Since the test suites generated by the W-, Wp-, and HSI-methods satisfy the sufficient conditions used in the H-method's completeness proof, the same arguments apply to these methods. Only the traversal component is restricted to $\Sigma_I^{\mathrm{tr}}$; transition checks and state distinguishing sequences still use the full finite input alphabet $\Sigma_I$. We omit the technical proof.   

% ......................................................................
\subsection{Lifting a learned DFSM to an SFSM}
Let $H_{\Sigma_I}=(S,s_0,\Sigma_I,\Sigma_O,h_{\Sigma_I})$ be a minimal DFSM hypothesis whose equivalence to $\ol M|_{\Sigma_I}$ has been established. We now construct an SFSM over $(\olg, E)$ from this hypothesis. 

Recall that, for $\varphi\in\olg$ and $e,e'\in E$,
\[
e\equiv_\varphi e'
\quad\Longleftrightarrow\quad
\forall a\in\varphi\colon e(a)=e'(a).
\]
For each $\varphi\in\olg$, choose a set $E_\varphi\subseteq E$ containing  exactly one
representative of every equivalence class of $E/{\equiv_\varphi}$. Define the SFSM $H=(S,s_0,V,D,\ol G,E,h^{hyp})$ by
\begin{equation}\label{eq:hhyp}
(s,\varphi,e,s')\in h^{hyp}
\quad\Longleftrightarrow\quad e\in E_\varphi \land 
\big( \forall a\in\Sigma_I\cap \varphi\centerdot (s,a,e(a),s') \in  h_{\Sigma_I}\big),
\end{equation}
where $s,s'\in S$, $\varphi\in \olg$ and $e\in E$.
 
% The transition relation $h^{{hyp}}$ is well-defined, since $\Sigma_I$ is representative for $(\ol G,E)$. In particular, every input $a\in\Sigma_I$ belongs to exactly one equivalence class $\varphi\in\ol G$, and $\Sigma_I\cap\varphi$ contains sufficiently many inputs to distinguish output expressions that are inequivalent on $\varphi$. Moreover, since $\ol G$ refines the guards of the unknown SFSM, all inputs in a class $\varphi$ induce the same successor state and output behavior. Consequently, the samples in $\Sigma_I\cap\varphi$ uniquely determine the successor state $q'$ and the output expression $e$, up to equivalence on $\varphi$, through the requirement $(q,a,e(a),q')\in h_{\Sigma_I}
% $ for every $a\in\Sigma_I\cap\varphi.$

% At any fixed state of $\ol M$, all inputs belonging to the same input
% equivalence class $\varphi$ enable the same symbolic transition and hence
% lead to the same successor state and are processed by the same output
% expression. The concrete output values may differ because
% they are determined by $e(a)$. 

The following lemma shows that $H$ is deterministic and completely specified
and preserves the behaviour of the DFSM hypothesis $H_{\Sigma_I}$ over $\Sigma_I$.

% Since any input set that is representative for $(\olg,E)$ is also representative for $(M,\olg,E)$, it suffices to prove the following lemma under the assumption that $\Sigma_I$ 
% is representative for $(M,\olg,E)$. The result then applies, in particular, when $\Sigma_I$
% is representative for $(\olg,E)$.

\begin{lemma}\label{lemma:lifting-properties}
Let $\ol M=(Q,q_0, V, D, \ol G, E, \ol h)$ be a deterministic and completely specified SFSM whose guard set $\olg$ is an input partition. Let 
$\Sigma_I\subseteq\DI$ be representative for $(\olg,E)$, and let $H_{\Sigma_I}
=(S,s_0,\Sigma_I,\Sigma_O,h_{\Sigma_I})$ be a minimal, deterministic, and completely specified
DFSM satisfying
$\Lang(H_{\Sigma_I})=\Lang(\ol M|_{\Sigma_I})$.
Then the SFSM $H=(S,s_0, V,D,\olg, E,h^{hyp})$ defined by~\eqref{eq:hhyp} is deterministic and completely specified, and
$\Lang(H|_{\Sigma_I})=\Lang(\ol M|_{\Sigma_I})$.
\end{lemma}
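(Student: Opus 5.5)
The plan is to compare $H$ with $\ol M$ through the minimal hypothesis. Since $H_{\Sigma_I}$ is minimal and language equivalent to $\ol M|_{\Sigma_I}$, every state $s\in S$ is reachable by some $\alpha\in\Sigma_I^*$. I will write $q_\alpha$ for the state of $\ol M$ reached by $\alpha$, so that $s$ and $q_\alpha$ have the same language over $\Sigma_I^*$. Minimality makes this correspondence well defined: states of $H_{\Sigma_I}$ are distinguishable, so any $q\in Q$ reached by access sequences of $s$ is $\Sigma_I^*$-equivalent to $s$. All local claims will be proved by comparing the transitions of $s$ with those of such a $q$.

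\emph{Existence (complete specification).} Fix $s$ with corresponding $q$, and fix $\varphi\in\olg$. Since $\olg$ is a partition and $\ol M$ is deterministic and complete, there is a unique transition $(q,\varphi,e,q')\in\ol h$. By the guard-overlap requirement with $g=g'=\varphi$, the set $\varphi\cap\Sigma_I$ is non-empty. For every $a\in\varphi\cap\Sigma_I$, $\ol M|_{\Sigma_I}$ produces $e(a)$ and moves to $q'$.
\begin{itemize}
\item By $\Sigma_I^*$-equivalence of $s$ and $q$, $H_{\Sigma_I}$ produces $e(a)$ from $s$ on $a$, and its successor $s_a$ is $\Sigma_I^*$-equivalent to $q'$.
\item All $s_a$ for $a\in\varphi\cap\Sigma_I$ are then pairwise $\Sigma_I^*$-equivalent. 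By minimality they coincide; call this state $s'$.
\item Let $\hat e\in E_\varphi$ be the representative with $\hat e\equiv_\varphi e$. Then $\hat e(a)=e(a)$ for all $a\in\varphi$.
\end{itemize}
Hence $(s,a,\hat e(a),s')\in h_{\Sigma_I}$ for every $a\in\Sigma_I\cap\varphi$, so $(s,\varphi,\hat e,s')\in h^{hyp}$. Because $\olg$ covers $\DI$, $H$ is completely specified.

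\emph{Uniqueness (determinism).} Since the guards of $H$ are classes of a partition, two transitions enabled by the same input share the same $\varphi$. Suppose $(s,\varphi,e_1,s_1)$ and $(s,\varphi,e_2,s_2)$ are both in $h^{hyp}$. Determinism of $H_{\Sigma_I}$ and non-emptiness of $\varphi\cap\Sigma_I$ give $s_1=s_2$. It also gives $e_1(a)=e_2(a)$ for all $a\in\varphi\cap\Sigma_I$. By the output-separation requirement with $g=g'=\varphi$, $e_1$ and $e_2$ cannot be distinguishable on $\varphi$, so $e_1\equiv_\varphi e_2$. Since $E_\varphi$ contains exactly one representative per class, $e_1=e_2$.

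\emph{Languages.} By construction, for every $a\in\Sigma_I\cap\varphi$ the unique transition of $H$ from $s$ on $a$ yields the same output and successor as $h_{\Sigma_I}$. Hence $H|_{\Sigma_I}$ and $H_{\Sigma_I}$ have identical transition relations, and $\Lang(H|_{\Sigma_I})=\Lang(H_{\Sigma_I})=\Lang(\ol M|_{\Sigma_I})$.

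The main obstacle is the existence step. It requires showing that all sample inputs of a class lead $H_{\Sigma_I}$ to a single successor and admit a common assignment. This depends on reachability and minimality of $H_{\Sigma_I}$ to transfer the uniform behaviour of $\ol M$ on $\varphi$. I would argue it carefully by induction via access sequences in $\Sigma_I^*$.
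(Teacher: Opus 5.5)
Your proof is correct and follows the paper's argument essentially step by step: you take an access sequence to obtain a $\Sigma_I^*$-equivalent state of $\ol M$, merge the successors $s_a$ by minimality, and use the output-separation requirement together with the choice of $E_\varphi$ for determinism. Two minor points: the $\Sigma_I^*$-equivalence of $s$ and $q_\alpha$ already follows from language equivalence and determinism, so minimality is needed only for reachability and for merging the $s_a$; and your existence argument is already complete, so the announced induction is unnecessary.
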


\begin{proof}
Let $s\in S$ and $\varphi\in\olg$ be arbitrary. By the guard-overlap condition, $\varphi\cap \Sigma_I\neq \emptyset$. 
Since $H_{\Sigma_I}$ is minimal, there exists an input sequence $\alpha\in\Sigma_I^*$ that reaches $s$. Let $q$ be the state reached by $\ol M$ after $\alpha$. Since $\ol M$ is deterministic and completely specified, there exists a unique
successor state $q'$ and an output expression $e\in E$ such that $(q,\varphi,e,q')\in \ol h$. Choose  $\widehat e\in E_\varphi$ with $\widehat e\equiv_\varphi e$.

For any $a\in \varphi\cap\Sigma_I$, let $(s,a,o_a,s_a)\in h_{\Sigma_I}$
be the corresponding transition of $H_{\Sigma_I}$. The language equivalence
$\Lang(H_{\Sigma_I})=\Lang(\ol M|_{\Sigma_I})$
implies that $H_{\Sigma_I}$ and $\ol M$ produce the same output on input $a$ after executing $\alpha$. Hence, $o_a=e(a)=\widehat e(a)$.

We claim that the successor state $s_a$ is independent of the choice of
$a\in \varphi\cap\Sigma_I$. Let $a,b\in \varphi\cap\Sigma_I$. In $\ol M$, both $a$ and $b$
take the state $q$ to the same successor state $q'$. Therefore, the continuation behaviours of $\ol M$ after
$\alpha a$ and $\alpha b$ coincide. By the language equivalence of $H_{\Sigma_I}$ and
$\ol M|_{\Sigma_I}$, the states $s_a$ and $s_b$ have the same
input-output behaviour over $\Sigma_I$. Since $H_{\Sigma_I}$ is minimal, $s_a=s_b$.
Let $s'$ denote this common successor state. We then have
$\forall a\in \varphi\cap\Sigma_I \colon (s,a,\widehat e(a),s')\in h_{\Sigma_I}$.
By Equation~\eqref{eq:hhyp}, $(s,\varphi,\widehat e,s')\in h^{hyp}$.
Since $\olg$ partitions $\DI$, $H$ is completely specified.

To prove determinism, consider two lifted transitions $(s,\varphi_i,e_i,s_i')\in h^{hyp}$, $i=1,2$, from $s$ with overlapping guards, i.e., $\varphi_1\cap \varphi_2\neq \emptyset$. Since $\olg$ partitions $\DI$, $\varphi_1=\varphi_2$. Let $\varphi=\varphi_1=\varphi_2$.
Since $\varphi\cap\Sigma_I\neq\emptyset$ and $H_{\Sigma_I}$ is deterministic,
we obtain $s_1'=s_2'$ and $e_1(a)=e_2(a)$ for every $a\in \varphi\cap\Sigma_I$. The output-separation requirement for $\Sigma_I$ implies $e_1\equiv_\varphi e_2$; otherwise there exists $a\in \Sigma_I\cap \varphi$ satisfying $e_1(a)\neq e_2(a)$. Since $e_1,e_2\in E_{\varphi}$, it follows that $e_1=e_2$, and $H$ is deterministic.  

By~\eqref{eq:hhyp}, every lifted transition $(s,\varphi,e,s')\in h^{{hyp}}$ restricts to transitions 
$(s,a,e(a),s')\in h_{\Sigma_I}$ of $H_{\Sigma_I}$, for every $a\in \varphi\cap \Sigma_I$.
Since $H$ is completely specified, any transition of $H_{\Sigma_I}$ is obtained in this way. 
Hence, $H|_{\Sigma_I}$ and $H_{\Sigma_I}$ have the same transitions and initial state. Therefore, $\Lang(H|_{\Sigma_I})=\Lang(H_{\Sigma_I})=\Lang(\ol M|_{\Sigma_I})$.

\end{proof}

% \begin{theorem}\label{th:learning}
% Let $\ol M$ be an unknown deterministic and completely specified SFSM over $(\ol G,E)$, and let $\Sigma_I$ be representative for $(\ol G,E)$. Suppose that a DFSM learning algorithm operating over the input alphabet $\Sigma_I$ terminates with a hypothesis $H_{\Sigma_I}$ such that  $\Lang(H_{\Sigma_I})=\Lang(\ol M|_{\Sigma_I})$.
% Then the lifted hypothesis $H$, constructed from $H_{\Sigma_I}$ according to Equation~\eqref{eq:hhyp}, satisfies
% $\Lang(H)=\Lang(\ol M)$.

% \end{theorem}

\begin{theorem}\label{th:learning}
Let \(\mathcal I\) be an SUT with a deterministic and completely specified SFSM representation $ M=(Q,q_0,V,D,G,E,h)$ such that $\Lang(\mathcal I)=\Lang(M)$.

Let $\olg$ be a finite input partition that refines every guard occurring in $M$, let $\ol M$ be the refinement defined by~\eqref{eq:olM} and let $\Sigma_I$ be representative for $(\olg, E)$.

Suppose that a DFSM learning algorithm over $\Sigma_I$ returns a
minimal hypothesis $H_{\Sigma_I}$ satisfying
\[
\Lang(H_{\Sigma_I})
=
\Lang(M|_{\Sigma_I}).
\]
Then the SFSM $H$ defined by~\eqref{eq:hhyp} satisfies
\[
\Lang(H)
=
\Lang(\mathcal I).
\]
\end{theorem}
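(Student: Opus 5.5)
The plan is to chain three earlier results: Lemma~\ref{lemma:refine}, Lemma~\ref{lemma:lifting-properties} and Theorem~\ref{th:main}.

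First, observe that the hypothesis $\Lang(H_{\Sigma_I})=\Lang(M|_{\Sigma_I})$ can be rewritten in terms of $\ol M$. The finite instantiation satisfies $\Lang(M|_{\Sigma_I})=\Lang(M)\cap(\Sigma_I\times\DO)^*$. By Lemma~\ref{lemma:refine}, $\Lang(\ol M)=\Lang(M)$ and $\ol M$ is deterministic and completely specified. Hence
\[
\Lang(\ol M|_{\Sigma_I})=\Lang(\ol M)\cap(\Sigma_I\times\DO)^*=\Lang(M|_{\Sigma_I})=\Lang(H_{\Sigma_I}).
\]
Since $\olg$ is a partition, the hypotheses of Lemma~\ref{lemma:lifting-properties} hold. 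That lemma gives that $H$ is deterministic and completely specified, with $\Lang(H|_{\Sigma_I})=\Lang(\ol M|_{\Sigma_I})$. In other words, $H$ and $\ol M$ are $\Sigma_I^*$-equivalent.

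Second, I will check that $\Sigma_I$ is representative for the pair $(\ol M,H)$ in the sense of Definition~\ref{def:repMM}. Both machines use only guards from $\olg$ and output assignments from $E$: for $\ol M$ this holds by \eqref{eq:olM}, and for $H$ by \eqref{eq:hhyp}, since $E_\varphi\subseteq E$. By assumption, $\Sigma_I$ is representative for $(\olg,E)$, so Remark~\ref{rm:rep} (via Lemma~\ref{lemma:rep}) shows that $\Sigma_I$ is representative for $(\ol M,H)$.

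Third, Theorem~\ref{th:main} applied to $\ol M$ and $H$ yields $\Lang(H)=\Lang(\ol M)$. Chaining this with $\Lang(\ol M)=\Lang(M)=\Lang(\mathcal I)$ gives the claim.

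The main obstacle is mild. Remark~\ref{rm:rep} requires that every guard used by the machines lies in the given guard set. For $H$ this holds by construction. For $\ol M$, the transitions are labelled by classes $\varphi\in\olg$, and only non-empty classes appear, so the inclusion holds as well. I would also make sure that the identification $\Lang(M|_{\Sigma_I})=\Lang(\ol M|_{\Sigma_I})$ is stated explicitly. Lemma~\ref{lemma:lifting-properties} is phrased with $\ol M$ while the theorem is phrased with $M$, and it is exactly this identification that connects the two.
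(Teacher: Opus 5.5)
Your proposal is correct and follows the paper's proof. The paper uses the same chain: Lemma~\ref{lemma:refine} identifies $\Lang(M|_{\Sigma_I})$ with $\Lang(\ol M|_{\Sigma_I})$, Lemma~\ref{lemma:lifting-properties} gives $\Sigma_I^*$-equivalence of $H$ and $\ol M$, and Theorem~\ref{th:main} lifts this to $\DI^*$. You also appeal explicitly to Remark~\ref{rm:rep} to pass from representativeness for $(\olg,E)$ to representativeness for the pair $(\ol M,H)$, which fills in a step the paper leaves implicit.
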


\begin{proof}
By Lemma~\ref{lemma:refine},
\(\Lang(\ol M)=\Lang(M)\), and therefore $\Lang(\ol M|_{\Sigma_I})=\Lang( M|_{\Sigma_I})$.
Since $\Lang(H_{\Sigma_I})=\Lang(M|_{\Sigma_I})=\Lang(\ol M|_{\Sigma_I})$, Lemma~\ref{lemma:lifting-properties} implies 
that $H$ and $\ol M$ are equivalent over $\Sigma_I$. Since $\Sigma_I$ is representative for $(\ol G,E)$, Theorem~\ref{th:main} implies that they are equivalent over $\DI$.
Consequently, $\Lang(H)=\Lang(\ol M)=\Lang(M)=\Lang(\mathcal I)$.
\end{proof}

% ......................................................................
\subsection{Coarsening the learned SFSM}
The lifted SFSM $H$ has transitions for each input class $\varphi\in\olg$ at each state. For model checking, it can be useful to reduce the number of transitions. We merge the transitions with the same source state, target state and output assignment by taking the union of their guards. For any $s,s'\in S$ and $e\in E$, define:
\[ U_{(s,e,s')}=\{\varphi\in \olg\mid (s,\varphi,e,s')\in h^{hyp}\},\qquad  g_{(s,e,s')}=\bigcup_{\varphi\in U_{(s,e,s')}}\varphi.\]
%\[g_{(s,e,s')}=\bigcup_{\varphi\in U_{(s,e,s')}}\varphi.\]
Let
\[\hat G=\{ g_{(s,e,s')}\mid (s,e,s')\in S\times E\times S, g_{(s,e,s')}\neq \emptyset\}, \]
and 
\[\hat h =\{(s,g_{(s,e,s')},e,s') \mid s,s'\in S, e\in E, g_{(s,e,s')}\in \hat G\}.\] 
 The coarsened SFSM is $\hat H=(S,s_0,V,D,\hat{G},E,\hat h)$. 

 At each state, the merged guards remain disjoint and cover $\DI$. Thus, $\hat H$ is deterministic and completely specified. 
Refining its guards with respect to $\olg$ reconstructs $H$. 
Hence, by Lemma~\ref{lemma:refine}, $\Lang(\hat H) = \Lang(H)$.

\subsection{A learning and coarsening example}
\label{ssec:learning-example}

Consider the SFSM $M$ from Example~\ref{ex:sfsm} and suppose
that it represents the behaviour of the SUT $ \mathcal I$. Let $m=4$,
$G'=\{g_1,g_3,g_5\}$, and $E=\{e_1,\dots,e_5\}$, with
guards and output assignments as defined in that example.
The input equivalence partition induced by $G'$ is
$\olg=\{\varphi_1,\dots,\varphi_8\}$, where
\[
\begin{aligned}
&\varphi_1=g_1\cap g_3\cap g_5,
&&\varphi_2=g_1\cap g_3\cap\neg g_5,\\
&\varphi_3=g_1\cap\neg g_3\cap g_5,
&&\varphi_4=g_1\cap\neg g_3\cap\neg g_5,\\
&\varphi_5=\neg g_1\cap g_3\cap g_5,
&&\varphi_6=\neg g_1\cap g_3\cap\neg g_5,\\
&\varphi_7=\neg g_1\cap\neg g_3\cap g_5,
&&\varphi_8=\neg g_1\cap\neg g_3\cap\neg g_5.
\end{aligned}
\]
All eight classes are non-empty. Choose $a_i\in\varphi_i$ as
\[
(a_1,\dots,a_8)=(-2,-10,-3,-11,8,14,7,11)
\]
and let $\Sigma_I=\{a_1,\dots,a_8\}$.
At each selected input $a_i$, the five output assignments
have pairwise distinct values, as shown in Table~\ref{tab:learning-outputs}. Thus, $\Sigma_I$ meets every
class and fulfils the output-separation condition for
$(\olg,E)$.

Suppose that the DFSM learner has obtained the hypothesis
$H_{\Sigma_I}$ in Table~\ref{tab:learning-hypothesis} and established its equivalence to the
SUT over $\Sigma_I$ by complete testing with state bound
$m=4$. The initial state is $s_0$; each table entry gives
the output and successor state. The states $s_0,s_1,s_2,s_3$ are reached by $\varepsilon$, $a_5$, $a_5a_5$, and $a_5a_7$, respectively.
Input $a_5=8$ produces the distinct outputs $16,8,64,0$
at these states. Hence, the hypothesis is reachable and
minimal.

\begin{table}[htbp]
    \centering
    \begin{tabular*}{.95\linewidth}
        {@{\extracolsep{\fill}}|c|c|c|c|c|c|c|c|c|}
        \hline
       $H_{\Sigma_I}$ & $a_1$ & $a_2$ & $a_3$ & $a_4$
        & $a_5$ & $a_6$ & $a_7$ & $a_8$ \\ \hline
        $s_0$
        & $(0,s_0)$ & $(0,s_0)$ & $(0,s_0)$ & $(0,s_0)$
        & $(16,s_1)$ & $(28,s_1)$ & $(14,s_1)$ & $(22,s_1)$
        \\ \hline
        $s_1$
        & $(-2,s_2)$ & $(-10,s_2)$ & $(-2,s_3)$ & $(-10,s_3)$
        & $(8,s_2)$ & $(14,s_2)$ & $(8,s_3)$ & $(12,s_3)$
        \\ \hline
        $s_2$
        & $(4,s_2)$ & $(100,s_2)$ & $(-6,s_3)$ & $(-22,s_3)$
        & $(64,s_2)$ & $(196,s_2)$ & $(14,s_3)$ & $(22,s_3)$
        \\ \hline
        $s_3$
        & $(0,s_2)$ & $(0,s_3)$ & $(0,s_2)$ & $(0,s_3)$
        & $(0,s_2)$ & $(0,s_3)$ & $(0,s_2)$ & $(0,s_3)$
        \\ \hline
    \end{tabular*}
    \caption{The hypothesis $H_{\Sigma_I}$.
    Each entry gives the output and successor state.}
    \label{tab:learning-hypothesis}
\end{table}

The values of the output assignments at the selected inputs are given in Table~\ref{tab:learning-outputs}. 
\begin{table}[htbp]
    \centering
    \begin{tabular*}{.85\linewidth}
        {@{\extracolsep{\fill}}|c|c|c|c|c|c|c|c|c|}
        \hline
        & $a_1$ & $a_2$ & $a_3$ & $a_4$
        & $a_5$ & $a_6$ & $a_7$ & $a_8$ \\ \hline
        $e_1$ & $0$ & $0$ & $0$ & $0$
        & $0$ & $0$ & $0$ & $0$ \\ \hline
        $e_2$ & $-2$ & $-10$ & $-3$ & $-11$
        & $8$ & $14$ & $7$ & $11$ \\ \hline
        $e_3$ & $4$ & $100$ & $9$ & $121$
        & $64$ & $196$ & $49$ & $121$ \\ \hline
        $e_4$ & $-4$ & $-20$ & $-6$ & $-22$
        & $16$ & $28$ & $14$ & $22$ \\ \hline
        $e_5$ & $-1$ & $-9$ & $-2$ & $-10$
        & $9$ & $15$ & $8$ & $12$ \\ \hline
    \end{tabular*}
    \caption{Output-assignment values at the selected inputs. Each entry in row $e_j$ and column $a_i$ is $e_j(a_i)$.
    The five assignments have pairwise distinct values
    in each column.}
    \label{tab:learning-outputs}
\end{table}

We now lift $H_{\Sigma_I}$ to an SFSM $H$ over $\olg$. Since each column of Table~\ref{tab:learning-outputs} contains distinct values, the symbolic
transitions can be read directly from this table: replace
each label $a_i/b$ in $H_{\Sigma_I}$ by $\varphi_i/e_j$,
where $e_j$ is the unique assignment satisfying $e_j(a_i)=b$.
The source and successor states remain unchanged.
% Each concrete transition
% $s\xrightarrow{a_i/b}s'$ determines a unique assignment
% $e\in E$ with $e_j(a_i)=b$ and yields the symbolic transition
% $s\xrightarrow{\varphi_i/e}s'$.
For example,
\[
s_1\xrightarrow{a_3/(-2)}s_3
\quad\text{yields}\quad
s_1\xrightarrow{\varphi_3/e_5}s_3,
\]
since $a_3=-3$ and $e_5(-3)=-2$.
The lifted SFSM has eight outgoing transitions at each
state, hence $32$ transitions in total.

\begin{table}[htbp]
    \centering
    \begin{tabular*}{.85\linewidth}
        {@{\extracolsep{\fill}}|c|c|c|c|c|c|c|c|c|}
        \hline
       $H$ & $\varphi_1$ & $\varphi_2$ & $\varphi_3$ & $\varphi_4$
        & $\varphi_5$ & $\varphi_6$ & $\varphi_7$ & $\varphi_8$
        \\ \hline
        $s_0$
        & $(e_1,s_0)$ & $(e_1,s_0)$ & $(e_1,s_0)$ & $(e_1,s_0)$
        & $(e_4,s_1)$ & $(e_4,s_1)$ & $(e_4,s_1)$ & $(e_4,s_1)$
        \\ \hline
        $s_1$
        & $(e_2,s_2)$ & $(e_2,s_2)$ & $(e_5,s_3)$ & $(e_5,s_3)$
        & $(e_2,s_2)$ & $(e_2,s_2)$ & $(e_5,s_3)$ & $(e_5,s_3)$
        \\ \hline
        $s_2$
        & $(e_3,s_2)$ & $(e_3,s_2)$ & $(e_4,s_3)$ & $(e_4,s_3)$
        & $(e_3,s_2)$ & $(e_3,s_2)$ & $(e_4,s_3)$ & $(e_4,s_3)$
        \\ \hline
        $s_3$
        & $(e_1,s_2)$ & $(e_1,s_3)$ & $(e_1,s_2)$ & $(e_1,s_3)$
        & $(e_1,s_2)$ & $(e_1,s_3)$ & $(e_1,s_2)$ & $(e_1,s_3)$
        \\ \hline
    \end{tabular*}
    \caption{The lifted SFSM $H$.
    An entry $(e_j,s')$ in row $s$ and column $\varphi_i$
    denotes the transition $s\xrightarrow{\varphi_i/e_j}s'$.}
    \label{tab:learning-lifted}
\end{table}
Coarsening merges transitions with the same source state,
successor state, and output assignment.
Table~\ref{tab:learning-coarsening} lists the groups and their merged guards; an index
set $J$ denotes the classes $\{\varphi_i\mid i\in J\}$.
\begin{table}[htbp]
    \centering
    \begin{tabular*}{.85\linewidth}
        {@{\extracolsep{\fill}}|c|c|c|c|c|}
        \hline
        Source & $J$ & Assignment & Successor
        & $\bigcup_{i\in J}\varphi_i$
        \\ \hline
        $s_0$ & $\{1,2,3,4\}$ & $e_1$ & $s_0$ & $g_1$
        \\ \hline
        $s_0$ & $\{5,6,7,8\}$ & $e_4$ & $s_1$ & $g_2$
        \\ \hline
        $s_1$ & $\{1,2,5,6\}$ & $e_2$ & $s_2$ & $g_3$
        \\ \hline
        $s_1$ & $\{3,4,7,8\}$ & $e_5$ & $s_3$ & $g_4$
        \\ \hline
        $s_2$ & $\{1,2,5,6\}$ & $e_3$ & $s_2$ & $g_3$
        \\ \hline
        $s_2$ & $\{3,4,7,8\}$ & $e_4$ & $s_3$ & $g_4$
        \\ \hline
        $s_3$ & $\{1,3,5,7\}$ & $e_1$ & $s_2$ & $g_5$
        \\ \hline
        $s_3$ & $\{2,4,6,8\}$ & $e_1$ & $s_3$ & $g_6$
        \\ \hline
    \end{tabular*}
    \caption{Coarsening the lifted SFSM.
    Each row merges the transitions with guards
    $\varphi_i$, $i\in J$, into one transition.}
    \label{tab:learning-coarsening}
\end{table}

Thus, the coarsened SFSM $\hat H$ has eight transitions
and coincides with $M$ up to the renaming $s_i\mapsto q_i$.
In particular,
\[
\Lang(\hat H)=\Lang(H)=\Lang(M)=\Lang(\mathcal I).
\]

% \subsection{A learning and coarsening example}
% Consider the SFSM $M$ given in Example~\ref{ex:sfsm}. Suppose that $M$ is a SFSM representative of the SUT. 
% Let $m=4$, $G'=g_1,g_3,g_5\}$ and $E=\{e_1,\dots, e_5\}$, where $g_i, e_j$ are defined in Example~\ref{ex:sfsm}.  We first calculate the input equivalence partition $\olg $ form $G'$ and obtain $8$ equivalence classes:
% \begin{aligned}
%   & \varphi_1=g_1\cap g_3\cap g_5, \qquad  \varphi_2=g_1\cap g_3\cap \neg g_5,\\
%    & \varphi_3=g_1\cap \neg g_3\cap g_5, \qquad  \varphi_4=g_1\cap \neg g_3\cap \neg g_5,\\
%     & \varphi_5=\neg g_1\cap g_3\cap g_5, \qquad  \varphi_6=\neg g_1\cap g_3\cap  \neg g_5,\\
%      & \varphi_7=\neg g_1\cap \neg g_3\cap g_5, \qquad  \varphi_8=\neg g_1\cap \neg g_3\cap \neg g_5,\\
% \end{aligned}
% Select $a_i\in \varphi_i$ by $a_1=-2,a_2=-10, a_3=-3, a_4=-11, a_5=8, a_6=14, a_7=7, a_8=11$. Let
%  $\Sigma_I=\{a_i\mid i=1,\dots 8\}$. It fulfils the output-separation condition for $(\olg, E)$.  Suppose the DFSM learner has obtain the following hypothesis $H_{\Sigma_I}$ and establish it equivalence by complete testing.
%=========================================================================
%\input{section6_construction}

\section{Constructing finite representative input sets}\label{sec:algorithms}
Definition~\ref{def:repGE} requires witnesses for non-empty guard overlaps and separating inputs for output assignments that differ on those overlaps. We give a direct construction and prove that it produces a finite representative set. Appendix~\ref{app:algorithms} gives further constructions and procedures for reducing the number of representatives.

\subsection{Solver assumptions}\label{ssec:construction-assumptions}
Let $G=\{g_1,\ldots,g_r\}$ and $E=\{e_1,\ldots,e_t\}$ be finite sets of guards and total output assignments. We assume that the formulas $g_i(x)\wedge g_j(x)$ and $g_i(x)\wedge g_j(x)\wedge(e_p(x)\neq e_q(x))$ can be expressed in a decidable theory supported by an SMT solver. Every call to $\smta$ must terminate with a correct satisfiability result. For a satisfiable formula, it must also return a witness $a\in\DI$. In the pseudocode, $\mathsf{sat}$ is a Boolean flag; the witness is undefined when this flag is false. These assumptions concern the effective construction of representatives; the equivalence results above only require a representative set satisfying the stated conditions.

\subsection{Construction and correctness}
Algorithm~\ref{alg:SigmaI-basic} checks each guard pair, including pairs with $i=j$. For every non-empty overlap, it obtains a witness for each pair of output assignments that can be distinguished there. If no such pair exists, it retains the initial guard-overlap witness. Thus, a guard-overlap witness is added only when no separating witness has already been added for that overlap.

\begin{algorithm}[H]
\caption{Construct a representative input set for $(G,E)$}
\label{alg:SigmaI-basic}
\begin{algorithmic}[1]
\REQUIRE $G=\{g_1,\ldots,g_r\}$ and $E=\{e_1,\ldots,e_t\}$.
\ENSURE A finite representative set $\Sigma_I\subseteq\DI$ for $(G,E)$.
\STATE $\Sigma_I\gets\emptyset$
\FOR{$1\le i\le j\le r$}
  \STATE $(\mathsf{sat},a_{ij})\gets\smta(g_i\wedge g_j)$
  \IF{$\mathsf{sat}$}
    \STATE $\mathsf{sep}\gets\textbf{false}$
    \FOR{$1\le p<q\le t$}
      \STATE $(\mathsf{sat},a)\gets\smta(g_i\wedge g_j\wedge(e_p(x)\neq e_q(x)))$
      \IF{$\mathsf{sat}$}
        \STATE $\Sigma_I\gets\Sigma_I\cup\{a\}$; $\mathsf{sep}\gets\textbf{true}$
      \ENDIF
    \ENDFOR
    \IF{$\neg\mathsf{sep}$}
      \STATE $\Sigma_I\gets\Sigma_I\cup\{a_{ij}\}$
    \ENDIF
  \ENDIF
\ENDFOR
\STATE \textbf{return} $\Sigma_I$
\end{algorithmic}
\end{algorithm}

\begin{proposition}\label{prop:construction}
Under the assumptions in Section~\ref{ssec:construction-assumptions}, Algorithm~\ref{alg:SigmaI-basic} terminates and returns a finite representative input set for $(G,E)$. It makes at most
\[
  \frac{r(r+1)}{2}\left(1+\frac{t(t-1)}{2}\right)
\]
SMT solver calls.
\end{proposition}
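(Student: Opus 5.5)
The proof splits into three parts: termination, the count of solver calls, and correctness (that the output is finite and satisfies both requirements of Definition~\ref{def:repGE}). Termination and the count are read off the loop structure. The outer loop ranges over the $r(r+1)/2$ pairs $1\le i\le j\le r$, and each pair makes one call $\smta(g_i\wedge g_j)$. Only when that call reports $\mathsf{sat}$ does the inner loop run, making one call for each of the $t(t-1)/2$ pairs $p<q$. So each outer iteration makes at most $1+t(t-1)/2$ calls, which gives the stated bound. Every call terminates by the solver assumptions of Section~\ref{ssec:construction-assumptions}, and both loops are bounded, so the algorithm terminates. Finiteness is immediate, since each inner iteration adds at most one element and each outer iteration adds at most one more.

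For correctness, fix $g,g'\in G$ and write them as $g_i,g_j$ with $i\le j$; the case $g=g'$ is covered by $i=j$. Conjunction is symmetric, so $g\cap g'=g_i\cap g_j$, and the unordered enumeration loses nothing.

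\emph{Separation requirement.} Suppose $e,e'\in E$ are distinguishable on $g\cap g'$. Then $e\neq e'$, so after possibly swapping them they are $e_p,e_q$ with $p<q$, and the distinguishing pair is symmetric. Also $g\cap g'\neq\emptyset$, so by solver correctness the first call returns $\mathsf{sat}$ and the inner loop runs. For this $(p,q)$ the formula $g_i\wedge g_j\wedge(e_p(x)\neq e_q(x))$ is satisfiable, so the solver returns a witness $a$ that satisfies it. Hence $a\in g\cap g'$ with $e(a)\neq e'(a)$, and $a$ is added to $\Sigma_I$.

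\emph{Overlap requirement.} Suppose $g\cap g'\neq\emptyset$. Then the inner loop runs. If some inner call succeeds, its witness lies in $g_i\cap g_j$ and has been added. Otherwise $\mathsf{sep}$ stays false and $a_{ij}\in g_i\cap g_j$ is added. Either way $g\cap g'\cap\Sigma_I\neq\emptyset$.

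The only delicate point is bookkeeping in the pseudocode: inside the inner loop the variable $\mathsf{sat}$ is reused, and the witness $a_{ij}$ from the first call must survive until the final check. I would point out that $a_{ij}$ is stored under a separate name and is not overwritten, and that $\mathsf{sep}$ is reset to false at the start of each outer iteration. The remaining thing to check is that witnesses returned by $\smta$ really belong to $\DI$ and satisfy the formula; this is exactly the solver correctness assumption.
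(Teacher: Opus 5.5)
Your proof is correct and follows essentially the same route as the paper's: termination and the call bound come from the finite loop structure, the overlap requirement is met either by an inner separating witness or by the retained $a_{ij}$, and the separation requirement follows from satisfiability of the inner formula, with symmetry justifying the restriction to $i\le j$ and $p<q$. Your remark that $a_{ij}$ is not overwritten when $\mathsf{sat}$ is reused is a small bookkeeping detail that the paper leaves implicit.
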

\begin{proof}
All loops range over finite sets, and each solver call terminates. Hence the algorithm terminates and returns a finite set.
Consider a non-empty overlap $g_i\cap g_j$. The first solver call returns a witness $a_{ij}$. If some output assignments differ on this overlap, an inner solver call adds a separating witness, which also belongs to the overlap. Otherwise, the algorithm adds $a_{ij}$. Thus every non-empty guard overlap is represented.
For each pair of output assignments that differs on an overlap, the corresponding inner formula is satisfiable, and its returned witness is added to $\Sigma_I$. Since guard intersection and output inequality are symmetric, considering $i\le j$ and $p<q$ covers all required pairs. Both conditions of Definition~\ref{def:repGE} therefore hold.
There are $r(r+1)/2$ guard pairs. Each requires one overlap test and at most $t(t-1)/2$ separation tests, which gives the stated bound.
\end{proof}

The bound counts solver calls, not running time; the cost of each call depends on the underlying theory and the generated formula. The construction does not guarantee a representative set of minimum size.

\subsection{Use in testing and learning}
For learning, the construction can be applied to $(\olg,E)$, where $\olg$ is the known input partition. For testing against a reference model with guards $G_h$ and output assignments $E_h$, it can be applied to $(G_h\cup G,E_h\cup E)$. The resulting general representative set also satisfies the model-specific requirements.

Appendix~\ref{app:algorithms} gives constructions tailored to input equivalence classes, a fixed reference model, a pair of models, and constant output assignments. It also gives MaxSMT variants and a greedy set-cover procedure. The latter can reduce a candidate set while preserving all requirements for representativeness, but does not guarantee a minimum-size set. These procedures are given as theoretical constructions; their implementation and experimental evaluation are left for future work.

% % =================================================
%======================
\section{Related work}\label{sec:related}

\paragraph{Model-based testing for symbolic finite-state machines.}
Petrenko and Sim{\~a}o~\cite{PetrenkoSimao2015} extended checking
experiments to FSMs with symbolic inputs. Petrenko~\cite{DBLP:journals/sosym/Petrenko19} later considered
machines with symbolic inputs and outputs and distinguished assignment/output faults
from a more general class of transition faults.
In the transition-fault domain, implementations
may use arbitrary guards, while their output assignments must belong to a 
fixed finite set $E$. The  number of implementation states is also bounded by
the number of states of the reference model. For an infinite input domain, completeness
of a concrete test instantiation further requires the
$\Omega^d$-converter condition~\cite[Theorem 6]{DBLP:journals/sosym/Petrenko19}. This condition requires each reference-defined symbolic input sequence of length $d$ to produce a single symbolic output sequence in the implementation. Here, $d$ is chosen so that every state of the reference model is reachable by a defined symbolic input sequence shorter than $d$.
%=========================
%============================================

The fault domain considered in this paper is defined differently. We assume that finite
sets $G$ and $E$ are given such that the observable SUT behaviour has an SFSM representation whose transitions use guards from $G$ and output assignments from $E$. This assumption concerns the existence of such a representation; the SUT itself need not  use the
guards from $G$. The upper state bound may exceed the number of states of the reference model. Under these assumptions, no
$\Omega^d$-converter condition is needed.

Our testing method is based on a finite representative input set $\Sigma_I$. Under the guard-overlap and output-separation conditions, language equivalence of the finite instantiations over $\Sigma_I$ implies language equivalence of the SFSMs over $\DI$.

Taromirad and
Mousavi~\cite{TaromiradMousavi2017} study  grey-box conformance  testing for symbolic reactive state machines. Their test generation is based on a transition composition of the specification and an abstraction of the particular implementation. 
Our model-specific construction requires a reference model and finite sets of admissible guards and output assignments; it does not require a separate abstraction of the implementation. It first restricts the concrete input domain to a finite representative input set. The resulting finite instantiations can then be tested by any complete DFSM testing method whose assumptions are satisfied.

\paragraph{Active learning of symbolic state machines.}
Classical active automata learning infers finite-state models through
membership and equivalence queries. Several algorithms have been developed for learning Mealy machines~\cite{Niese03, ShahbazG09,IsbernerHS14,DBLP:conf/tacas/VaandragerGRW22}. 
These algorithms assume a finite input alphabet.
Other approaches address large or infinite input domains by refining
finite alphabet abstractions or by learning symbolic transition
predicates during the learning process
\cite{howarEtAl2011,aartsEtAl2012,drewsDAntoni2017,
ArgyrosDAntoni2018}.

Irie et al.~\cite{IrieWagaSuenaga2026} propose the \(\Lambda_M^*\) algorithm for learning deterministic and complete symbolic Mealy automata over potentially infinite input alphabets. Their algorithm maintains a finite set of essential input characters, refines this set during learning, and constructs a finite Mealy hypothesis whose transitions are subsequently generalised using symbolic predicates.
The use of finite concrete inputs followed by symbolic generalisation is closely related to our approach. An important difference concerns the output model. 
In their model, each symbolic transition produces a constant output. Our SFSMs use a known finite set of input-dependent output assignments and allow infinite concrete output domains.

The method presented here starts from information obtained before the learning process.
Static analysis provides a finite set $G'$ of branching conditions and a finite set $E$
of output assignments. The input equivalence
classes $\olg$ induced by $G'$ are assumed to refine the guards of an SFSM representation of the observable SUT behaviour. From these classes $\olg$ and output
assignments $E$, we construct a finite representative input alphabet
$\Sigma_I$. A finite-alphabet DFSM learning algorithm,
such as the $L^\#$-algorithm
\cite{DBLP:conf/tacas/VaandragerGRW22}, can then learn the
finite instantiation over $\Sigma_I$.

Equivalence queries are answered by complete DFSM conformance
testing, using an upper bound on the number of
distinguishable reachable states. Once a DFSM hypothesis equivalent to the
 SUT over $\Sigma_I$ has been learned, it is lifted
to an SFSM hypothesis. Theorem~\ref{th:learning} then gives language equivalence over the full input domain.
The contribution is therefore not a new active learning algorithm, but the reduction to finite-alphabet learning and the subsequent lifting result.
Although the learned finite instantiation has a finite output alphabet,
the concrete output domain of the SUT may be infinite.

%---------------------------------------

\section{Conclusion}\label{sec:conc}
Finite representative input sets reduce 
testing and learning of SFSMs to testing and learning of DFSMs over finite alphabets. A representative set contains witnesses for the relevant guard overlaps and separating inputs for the relevant pairs of output assignments. Under these conditions, equivalence of the finite instantiations implies language equivalence over the full concrete input domain.

For model-based testing, this result
allows complete DFSM testing methods to be
applied to finite instantiations of an SFSM reference model and of machines in the associated fault domain. 
For model learning, a finite instantiation can first be learned by a standard DFSM learning algorithm and then lifted to a language-equivalent SFSM. 
We also present SMT- and MaxSMT-based algorithms for constructing representative input
sets. A greedy set-cover procedure can remove candidates that are not needed.
The constructions cover general, input-equivalence-class, model-specific, and pair-specific representative sets.
 
\paragraph{Limitations.}
The results apply to deterministic and completely
specified SFSMs whose output assignments depend only on the current
input. For model-based testing, finite sets of admissible guards and output assignments must be known, together with an upper bound on the number of
distinguishable reachable states of an SFSM representation of the SUT. 

For model learning, static analysis must provide a 
finite set $G'$ of branching conditions whose induced partition refines the guards of a suitable SFSM representation and may contain up to $2^{|G'|}$ classes. It must also provide a finite set containing all output assignments used by that representation.
Answering equivalence queries by complete DFSM testing again requires an upper state bound. 
Static analysis and abstract interpretation may provide this information for software systems, but their precision and cost affect the applicability of the method.

\paragraph{Future work.}
An experimental evaluation should compare the SMT- and MaxSMT-based constructions. 
Relevant measures include solver runtime, the size of the resulting
representative input alphabet, and the  number and length of test cases and learning queries. 

The model class can be extended in several ways. Relevant cases include
partial and nondeterministic SFSMs, as well as output assignments that depend on
internal data, stored register values, previous inputs, or previous
outputs. It is also useful to consider systems for which the relevant guards and output
assignments cannot be determined completely before testing or learning
starts. 

A different learning scheme could operate directly on the finite alphabet $\olg$ of input equivalence classes, rather than first constructing and
learning a finite concrete instantiation. Such a scheme would require a specialised teacher that translates symbolic queries into concrete interaction with the SUT and relates the observable outputs to the available output assignments. 

Learning SFSM hypotheses could also be combined with symbolic model checking  
for black-box checking of systems
with large or infinite data domains. Further work should examine representative input sets constructed for particular properties or classes of properties. 

\section{Declaration of AI use}
During the preparation of this work, the authors used ChatGPT Pro
to improve the clarity, readability, and 
presentation of the manuscript. The tool was also used to generate TikZ code for Figures 1--3, assist in revising the algorithms and search for potentially relevant literature.
The scientific content,
definitions, formal results, and proofs were developed and
validated by the authors. They take full
responsibility for the content of the publication.

%=================================================================

\footnotesize
\bibliographystyle{plain}   % \citep
%\bibliography{references,hidyve,jp,relatedwork}
\bibliography{reference_sfsm}
\normalsize
% =======================================================================
\clearpage
\appendix
\section{Further algorithms for representative input sets}\label{app:algorithms}
\setcounter{algorithm}{0}
\renewcommand{\thealgorithm}{A.\arabic{algorithm}}
\providecommand{\theHalgorithm}{\arabic{algorithm}}
\renewcommand{\theHalgorithm}{appendix.\arabic{algorithm}}

This appendix gives the detailed constructions complementing Algorithm~\ref{alg:SigmaI-basic}. The general SMT variant also records requirement index sets for subsequent set-cover reduction. Further variants use MaxSMT or exploit input equivalence classes, a fixed reference model, a pair of models, or constant output assignments. These constructions are not implemented or evaluated experimentally in this paper.

Section~\ref{ssec:smt} specifies the solver interfaces, and Section~\ref{ssec:setcover} gives the general SMT and MaxSMT constructions and the set-cover procedure. Sections~\ref{ssec:eq}--\ref{ssec:mm} give the specialised constructions. Section~\ref{ssec:cont} treats constant output assignments.

\subsection{Interfaces to SMT solvers}\label{ssec:smt}

The algorithms below use the two solver interfaces defined in this section. We assume that every generated formula belongs to a theory for which the solver is guaranteed to return either $\textsf{sat}$ or $\textsf{unsat}$; an $\textsf{unknown}$ result is not considered.  For MaxSMT calls, we also assume that the returned solution is globally optimal. These interfaces can be implemented  with solvers such as Z3~\cite{MouraB08},
%{10.1007/978-3-540-78800-3_24}, 
OptiMathSAT~\cite{SebastianiT20}
(if linear arithmetic suffices), and Yices~\cite{Dutertre:cav2014}, chosen depending on the required theory and optimisation support.

%==========================================
 
%=====================================

% ............................................................
\subsubsection{SMT solver interface}

The basic interface receives a  quantifier-free formula $\expr$ over the input variables $x=(\chi_1,\dots,\chi_k)$. If $\expr$ is satisfiable, the solver returns $\textsf{sat}$ and a value tuple $a\in \DI$ satisfying $\expr(a)$. Otherwise, it returns $\textsf{unsat}$ and $a$ is undefined. Algorithm~\ref{alg:smt} specifies this interface.

\begin{algorithm} 
\caption{Function $\smta(\expr:\textit{FirstOrderFormula}) : (\mathbb B \times (\DI\cup \{\emptyset\}))$}
\label{alg:smt}
\begin{algorithmic}[1]
\REQUIRE $\expr$ : a quantifier-free formula over the input variables. 
\ENSURE $sat$ : a Boolean value that is true if and only if $\expr$ is satisfiable. 
\ENSURE $a$ : If $sat$ is $\ttt$, a value  tuple $a\in\DI$ satisfying $\expr(a)$; otherwise $a$ is undefined.
\end{algorithmic}
\end{algorithm}

% ............................................................

\subsubsection{MaxSMT solver interface}

Some algorithms use an \emph{unweighted partial MaxSMT solver}. The input consists of a hard constraint $\expr$ and a finite set $T$ of \emph{soft constraints}. If $\expr$ is satisfiable, the solver returns $\textsf{sat}$, together with a value tuple $a\in \DI$ satisfying a maximum number of constraints from $T$, and the set $T'=\{s\in T\mid s(a)\}\subseteq T$ of soft constraints satisfied by $a$. Formally, \[a\in \operatorname{arg\,max}_{b\in \DI: \expr(b)}\left|\{ s\in T\mid s(b)\}\right|.\]
If several arguments attain the same maximum, any one of them may be returned. If $\expr$ is unsatisfiable, the solver returns $\textsf{unsat}$; $a$ is undefined and $T'=\emptyset$.

\begin{algorithm}
\caption{Function\newline $\smsmt(\expr : \textit{FirstOrderFormula},T:\pwr(\textit{FirstOrderFormula})) :  (\mathbb B \times (\DI\cup \{\emptyset\}) \times \pwr(\textit{FirstOrderFormula}))$}
\label{alg:smsmt}
\begin{algorithmic}[1]
\REQUIRE $\expr$ : a hard constraint that every returned solution must satisfy. 
\REQUIRE $T$ : a set of soft constraints; the solver maximises the number satisfied by the returned solution.
\ENSURE $sat$ :  a Boolean value that is true if and only if $\expr$ is satisfiable.
\ENSURE $a$ : If $sat$ is $\ttt$, a model of $\expr$ that satisfies a maximum number of  constraints from $T$. Otherwise  undefined.
\ENSURE $T'\subseteq T$ : the soft constraints satisfied by $a$. If $sat = \fff$ then $T'=\emptyset$.  
\end{algorithmic}
\end{algorithm}
\subsection{Set-cover formulation for representative inputs}\label{ssec:setcover}
Let $G=\{g_1,\dots,g_r\}$ and $E=\{e_1,\dots,e_t\}$ be finite sets of guards and output assignments. To construct a finite input set $\Sigma_I$  representative for $(G,E)$, we introduce indices for the  guard-overlap  and output-separation requirements.    
\[
\Delta_G=\{(i,j)\mid 1\le i\le j\le r,\ g_i\cap g_j\neq\emptyset\}.
\]
Each pair $(i,j)\in\Delta_G$ denotes the requirement to select some input $a\in g_i\cap g_j$.
\[
\Delta_O=\{(i,j,p,q)\mid 1\le i\le j\le r,\ 1\le p<q\le t,
\exists a\in g_i\cap g_j:\ e_p(a)\neq e_q(a)\}.
\]
Each tuple $(i,j,p,q)\in\Delta_O$ denotes the requirement to select some input $a\in g_i\cap g_j$ such that $e_p(a)\neq e_q(a)$.

Since every output-separation requirement already implies a non-empty guard overlap, the indices of pure guard overlap requirements can be restricted to
\[
\Delta_G'=\Delta_G\setminus\{(i,j)\mid \exists p,q:\ (i,j,p,q)\in\Delta_O\}.
\]
Let $\Delta=\Delta_G'\cup\Delta_O$.

An SMT solver can construct the requirement sets together with a finite  \emph{candidate set} $S\subseteq\DI$.  For every non-empty guard overlap, the solver provides an input in that overlap. For every satisfiable output-separation constraint,
it provides an input that separates the corresponding output assignments. For $a\in S$, the \emph{cover function $C$} records all requirements satisfied by $a$: 
\begin{equation}\label{eq:coverC}
C(a)=\{(i,j)\in\Delta_G'\mid a\in g_i\cap g_j\}
\cup
\{(i,j,p,q)\in\Delta_O\mid a\in g_i\cap g_j\wedge e_p(a)\neq e_q(a)\} \subseteq \Delta.
\end{equation}
Any subset $\Sigma_I$ of $S$ is  representative for $(G,E)$ if it satisfies
\[
\Delta\subseteq\bigcup_{a\in\Sigma_I}C(a).
\]

Given the candidate set $S$, selecting a smallest subset $\Sigma_I\subseteq S$ that covers all requirements in $\Delta$ is an instance of the \emph{minimum set cover} problem. Its decision version is NP-complete, and the corresponding optimisation problem is NP-hard.  Algorithms~\ref{alg:SigmaI-build} and~\ref{alg:SigmaI-buildmax} generate a candidate set $S$.  Algorithm~\ref{alg:SigmaI-setcover} then selects a subset that still covers every requirement. The returned set need not have minimum cardinality, either within $S$ or over the full input domain $\DI$.

Algorithm~\ref{alg:SigmaI-build} uses only the SMT interface $\smta(\expr)$. For each guard pair $(g_i,g_j)$, it first checks whether the overlap is non-empty. If so, it checks every pair of output assignments separately and records a witness whenever the corresponding separation formula is satisfiable. The resulting candidate set covers all requirements.  

The SMT constraints are generated separately for each guard pair and each output-assignment pair. This algorithm does not try to maximise the number of requirements  covered by a single input. Algorithm~\ref{alg:SigmaI-setcover} takes this additional coverage into account. 

\begin{algorithm}[H]
\caption{Construct requirement index sets and a candidate set for $(G,E)$ with simple SMT interface, \newline  Function $\textsc{ConstructDeltaSMT}(G:\pwr(\textit{Guards}),E:\pwr(\textit{OutputAssignments})) : (\pwr(\DI)\times \pwr(\Nat^2)\times\pwr(\Nat^4)\times\pwr(\Nat^2\cup\Nat^4))$}
\label{alg:SigmaI-build}
\begin{algorithmic}[1]
\REQUIRE Guards $G=\{g_1,\dots,g_r\}$; output assignments $E=\{e_1,\dots,e_t\}$.
\ENSURE Candidate set $S\subseteq\DI$, and requirement index sets $\Delta_G',\Delta_O, \Delta=\Delta_G'\cup\Delta_O$.
\STATE $S\gets\emptyset$; $\Delta_G'\gets\emptyset$; $\Delta_O\gets\emptyset$
\FOR{$1\le i\le j \le r$}
    \STATE $(\mathsf{sat},a_{ij})\gets \smta(g_i\wedge g_j)$
    \IF{$\mathsf{sat}$}
      \STATE $\mathsf{sep}\gets\textbf{false}$
      \FOR{$1 \le p < q \le t$}
          \STATE $(\mathsf{sat},a)\gets \smta(g_i\wedge g_j\wedge(e_p(x)\neq e_q(x)))$
          \IF{$\mathsf{sat}$}
            \STATE $\Delta_O\gets\Delta_O\cup\{(i,j,p,q)\}$; $S\gets S\cup\{a\}$; $\mathsf{sep}\gets\textbf{true}$
          \ENDIF
      \ENDFOR
      \IF{$\neg\mathsf{sep}$}
        \STATE $\Delta_G'\gets\Delta_G'\cup\{(i,j)\}$; $S\gets S\cup\{a_{ij}\}$
      \ENDIF
    \ENDIF
\ENDFOR
\STATE $\Delta\gets\Delta_G'\cup\Delta_O$
\STATE \textbf{return} $(S,\Delta_G',\Delta_O,\Delta)$
\end{algorithmic}
\end{algorithm}

Algorithm~\ref{alg:SigmaI-setcover} starts with the set $U=\Delta$ of uncovered requirements. In each iteration, it selects a candidate that covers the maximum number of requirements still in $U$. The selected candidate is added to $\Sigma_I$ and the requirements it covers are removed from $U$.  The precondition
\[\Delta\subseteq \bigcup_{a\in S}C(a)\]
ensures that the loop terminates with $U=\emptyset$. All unselected candidates are discarded.

\begin{algorithm}[H]
\caption{Greedy set-cover selection of $\Sigma_I$, \newline
Function $\textsc{GreedySetCover}(S:\pwr(\DI),\Delta:\pwr(\Nat^2\cup\Nat^4), C: S \to 2^\Delta) : \pwr(\DI)$}
\label{alg:SigmaI-setcover}
\begin{algorithmic}[1]
\REQUIRE Candidate set $S\subseteq\DI$; requirement index set $\Delta \subseteq\bigcup_{a\in S}C(a)$  with a cover function $C(a)$ 
%defined in Equation~\eqref{eq:coverC}.
\ENSURE $\Sigma_I\subseteq S$ such that $\Delta\subseteq\bigcup_{a\in\Sigma_I}C(a)$.
\STATE $\Sigma_I\gets\emptyset$; $U\gets\Delta$
\STATE \textbf{for each} $a\in S$ \textbf{do} compute $C(a)$ \textbf{end for}
\WHILE{$U\neq\emptyset$}
  \STATE choose $a^\ast\in S$ maximising $|C(a^\ast)\cap U|$
  % \IF{$C(a^\ast)\cap U=\varnothing$}
  %   \STATE \textbf{break}   \BST{how can $C(a^\ast)\cap U=\varnothing$ become true in line~5?}
  % \ENDIF
  \STATE $\Sigma_I\gets\Sigma_I\cup\{a^\ast\}$; $U\gets U\setminus C(a^\ast)$; $S\gets S\setminus\{a^\ast\}$
\ENDWHILE
\STATE \textbf{return} $\Sigma_I$
\end{algorithmic}
\end{algorithm}

Algorithm~\ref{alg:SigmaI-buildmax} is an alternative to
 Algorithm~\ref{alg:SigmaI-build} that uses MaxSMT. For each non-empty guard overlap, it searches for an input satisfying as many unsolved separation constraints as possible.
 The requirements corresponding to $T'$ are marked as covered and removed from $T$. The loop ends when either $T=\emptyset$ or $T' = \emptyset$. In the first case, all separation constraints for the current overlap have been covered. In the second case, none of the remaining constraints is satisfiable on the overlap.

% \begin{algorithm}[H]
% \caption{Construct requirements index sets and candidate set for $(G,E)$ using MAX SMT techniques,\newline
% Function $\textsc{ConstructDeltaMaxSMT}(G,E) : (S,\Delta_G',\Delta_O,\Delta)$}
% \label{alg:SigmaI-buildmax}
% \begin{algorithmic}[1]
% \REQUIRE Guards $G=\{g_1,\dots,g_m\}$; output expressions $E=\{e_1,\dots,e_t\}$.
% \ENSURE Candidate set $S\subseteq\DI$ and requirements index set $\Delta=\Delta_G'\cup\Delta_O$.
% \STATE $S\gets\varnothing$; $\Delta_G'\gets\varnothing$; $\Delta_O\gets\varnothing$
% \STATE $U \gets \{ e_i \neq e_j~|~1\le i < j \le t\wedge e_i,e_j\in E  \}$ \COMMENT{Set of soft constraints}
% \STATE $T \gets U$
% \FOR{$1\le i\le j \le m$}
%     \STATE $(\mathsf{sat},a,T')\gets \smsmt(g_i\wedge g_j,T)$
%     \IF{$\mathsf{sat}$}
%        \STATE $S\gets S\cup \{ a \}$
%         \IF{$T'\neq\varnothing$}
%            \REPEAT
%            \STATE $\Delta_O \gets \Delta_O\cup \{ (i,j,p,q)~|~e_p\neq e_q \in T' \}$
%            \STATE $T \gets T\setminus T'$
%            \STATE \textbf{if} $T = \varnothing$ \textbf{then break}
%            \STATE $(\mathsf{sat},a,T')\gets \smsmt(g_i\wedge g_j,T)$
%            \STATE \textbf{if} $T' = \varnothing$ \textbf{then break} 
%            \STATE $S\gets S\cup \{ a \}$
%            \UNTIL{$\ttt$}
%            \STATE $T\gets U$
%         \ELSE
%            \STATE $\Delta_{G'} \gets \Delta_{G'} \cup \{ (i,j) \}$
%         \ENDIF
%     \ENDIF
% \ENDFOR
% \STATE $\Delta\gets\Delta_G'\cup\Delta_O$
% \STATE \textbf{return} $(S,\Delta_G',\Delta_O,\Delta)$
% \end{algorithmic}
% \end{algorithm}

\begin{algorithm}[H]
\caption{Construct requirement index sets and a candidate set for $(G,E)$ using MaxSMT techniques, \newline Function
$\textsc{ConstructDeltaMaxSMT}(G : \pwr(\textit{Guards}),E:\pwr(\textit{OutputAssignments})) : (\pwr(\DI)\times \pwr(\Nat^2)\times\pwr(\Nat^4)\times\pwr(\Nat^2\cup\Nat^4))$}
\label{alg:SigmaI-buildmax}
\begin{algorithmic}[1]
\REQUIRE Guards $G=\{g_1,\dots,g_r\}$; output assignments $E=\{e_1,\dots,e_t\}$.
\ENSURE Candidate set $S\subseteq\DI$ and requirement index sets $\Delta_G',\Delta_O, \Delta=\Delta_G'\cup\Delta_O$.
\STATE $S\gets\emptyset$; $\Delta_G'\gets\emptyset$; $\Delta_O\gets\emptyset$
\STATE $U \gets \{ e_p(x) \neq e_q(x)~|~1\le p < q\le t\wedge e_p,e_q\in E  \}$ \COMMENT{Set of soft constraints}

\FOR{$1\le i\le j \le r$}
    \STATE $T \gets U$
    \STATE $(\mathsf{sat},a,T')\gets \smsmt(g_i\wedge g_j,T)$
    \IF{$\mathsf{sat}$} 
        \IF{$T'=\emptyset$} 
           \STATE $S\gets S\cup \{ a \}$
           \STATE $\Delta_{G}' \gets \Delta_{G}' \cup \{ (i,j) \}$
        \ELSE
           \REPEAT
            \STATE $S\gets S\cup \{ a \}$
            \STATE $\Delta_O \gets \Delta_O\cup \{ (i,j,p,q)~|~e_p(x)\neq e_q(x) \in T' \}$
            \STATE $T \gets T\setminus T'$
            \STATE \textbf{if} $T = \emptyset$ \textbf{then break}
            \STATE $(\mathsf{sat},a,T')\gets \smsmt(g_i\wedge g_j,T)$
           \UNTIL{$T' = \emptyset$}        
        \ENDIF
    \ENDIF
\ENDFOR
\STATE $\Delta\gets\Delta_G'\cup\Delta_O$
\STATE \textbf{return} $(S,\Delta_G',\Delta_O,\Delta)$
\end{algorithmic}
\end{algorithm}

% -------------------------------------------------
\subsection{Construction via input equivalence classes}\label{ssec:eq}

For model learning, the guards are the input equivalence classes $\olg $ induced by the branching conditions $G'$. Each input equivalence class $\varphi_U$ is defined by Equation~\eqref{eq:classphi}; only non-empty $\varphi_U$ are added to $\ol G$. At most $2^{|G'|}$ such classes can occur. 
Algorithm~\ref{alg:enumclasses} avoids enumerating
all subsets of $G'$ explicitly by maintaining  only satisfiable prefixes.

The construction can be viewed as an ordered binary decision tree. The root, at level zero,  is labelled by $\ttt$.
A node at level $i-1$ with label $\varphi$ has two possible children, labelled $\varphi\wedge \neg g_i$ and $\varphi\wedge g_i$. Unsatisfiable children are omitted.

Each frontier element
$(\varphi,a,U)$ stores a satisfiable
prefix formula $\varphi$, a witness $a\in\varphi$, and the set $U$ of
branching conditions occurring positively in $\varphi$. When $g_i$ is processed, the witness $a$ already proves that one of the two extensions is satisfiable. Only the other extension requires an SMT call. After the last level, the frontier contains  exactly the triples $(\varphi_U,a_U,U)$ for the input equivalence classes. Projecting this set onto its first component yields $\olg$.

\begin{algorithm}[H]
\caption{Construction of $\ol G$ using an ordered binary decision tree over $G'$, \newline
Function $\textsc{EnumerateClasses}(G':\textit{BranchingConditions}^*) : \pwr(\textit{Guards}\times\DI\times\pwr(BranchingConditions))$}
\label{alg:enumclasses}
\begin{algorithmic}[1]
\REQUIRE Ordered branching conditions $G'=\langle g_1,\dots,g_r\rangle$ extracted from the SUT via static analysis.
\ENSURE A finite set $\Phi$ containing exactly one triple $(\varphi_U,a_U,U)$ for each $U\subseteq G'$ with $\varphi_U\neq \emptyset$, where $a_U$ is a selected witness of $\varphi_U$.
% $\Phi=\{(\varphi_U,a_U,U)\mid  U\subseteq \{g_1,\dots,g_r\}, \varphi_U\,\, \text{is satisfiable}, a_U\in\varphi_U\}$.
%\STATE $\Phi \gets \varnothing$
\STATE choose any $a_0\in \DI$
\STATE $\Phi \gets \{(\ttt,a_0,\emptyset)\}$
\FOR{$1 \le i \le r$}
  \STATE $\Phi_\textsf{new} \gets \emptyset$
   \FOR{\textbf{each} $(\varphi,a,U)\in \Phi$}
     \IF{$a\in g_i$}      
       \STATE $\Phi_\textsf{new} \gets \Phi_\textsf{new}\cup \{(\varphi\wedge g_i, a,   U\cup\{g_i\})\}$
       \STATE $(\textsf{sat},b) \gets \smta(\varphi \wedge \neg g_i)$
       \IF{\textsf{sat}}
        \STATE $\Phi_\textsf{new} \gets \Phi_\textsf{new}\cup \{(\varphi\wedge \neg g_i, b, U)\}$
       \ENDIF
     \ELSE
        \STATE \COMMENT{$a\in\neg g_i$}
        \STATE $\Phi_\textsf{new} \gets \Phi_\textsf{new}\cup \{(\varphi\wedge \neg g_i, a, U)\}$
       \STATE $(\textsf{sat},b) \gets \smta(\varphi \wedge g_i)$
       \IF{\textsf{sat}}
        \STATE $\Phi_\textsf{new} \gets \Phi_\textsf{new}\cup \{(\varphi\wedge g_i, b, U\cup \{g_i\})\}$
        \ENDIF
     \ENDIF
     
   \ENDFOR
   \STATE $\Phi\gets \Phi_\textsf{new}$   
\ENDFOR
%\STATE $\Phi\gets \textsf{Frontier}$
\STATE \textbf{return} $\Phi$
\end{algorithmic}
\end{algorithm}

An AllSMT enumeration procedure~\cite{SpallittaSB25}
can be used 
to enumerate the satisfiable truth assignments to the branching conditions. An experimental comparison with Algorithm~\ref{alg:enumclasses} is left for future work. 

% As an alternative to  Algorithm~\ref{alg:enumclasses}, an AllSMT-style loop~\cite{SPALLITTA2025104346} that enumerates all satisfiable truth-patterns of the guards and returns one witness per non-empty class can be implemented. Neither the frontier calculation  method of Algorithm~\ref{alg:enumclasses} nor the AllSMT technique are universally superior to the other: their performance depends on the concrete structure of the formulae in $G'$. %~\cite{DBLP:journals/dam/GrooteZ03}.

% Since $\ol G$ partitions the input space, the guard overlap condition is trivially fulfilled if one input $a\in \varphi$ exists for every $\varphi \in \ol G$. This insight leads to a simplified cover function $C_\Phi(a)$ that just contains the pairs of output assignments distinguished by input $a$ which resides in exactly one class $\varphi_U\in \ol G$:
% \[C_\Phi(a)=\{(p,q)\mid e_p(a)\neq e_q(a), 1\le p<q\le t\}.\]

Because $\olg$ is a partition of $\DI$, distinct classes do not overlap. The guard-overlap requirement therefore reduces to selecting at least one input from each class. Within a fixed class, the cover function needs to record only the output-assignment pairs separated by an input:
\[C_\olg(a)=\{(p,q)\mid e_p(a)\neq e_q(a), 1\le p<q\le t\}.\]

% An alternative construction first computes the finite family of input equivalence classes
% \[
% \Phi=\left\{\varphi_U\neq\varnothing\mid
% \varphi_U\equiv\bigwedge_{g_i\in U}g_i\wedge\bigwedge_{g_i\notin U}\neg g_i,
% \ U\subseteq G\right\}.
% \]
% These equivalence classes form a partition of $\DI$. Every non-empty guard overlap is a union of equivalence classes. For each class $\varphi_U$ and output pair $(e_p,e_q)$, satisfiability of
% \[
% \varphi_U\wedge(e_p\neq e_q)
% \]
% decides whether the two output assignments are distinguishable on that class. Witnesses returned by the solver are collected as candidates and then passed to the greedy set-cover step.

Algorithm~\ref{alg:SigmaI-atoms} processes each non-empty input
equivalence class separately. It starts with the witness $a_0$ returned by Algorithm~\ref{alg:enumclasses} and uses MaxSMT to add inputs that cover the unsolved output-separation requirements in the same class. If no output-assignment pair is 
distinguishable on the class, $a_0$
is retained as its guard-overlap witness. The union of the
sets selected for the individual classes is returned as $\Sigma_I$.

\begin{algorithm}[H]
\caption{Construction of $\Sigma_I$ via input equivalence classes,\newline
Function $\textsc{ConstructSigma}(G' : \textit{BranchingConditions}^*,E:\pwr(\textit{OutputAssignments})) : \pwr(\DI)$}
\label{alg:SigmaI-atoms}
\begin{algorithmic}[1]
\REQUIRE Ordered branching conditions $G'=\langle g_1,\dots,g_r\rangle$; output assignments $E=\{e_1,\dots,e_t\}$.
\ENSURE A finite   set $\Sigma_I\subseteq\DI$ that is representative for $(\olg,E)$, where $\olg$ is the set of input equivalence classes constructed from $G'$.
\STATE $\Sigma_I\gets\emptyset$
\STATE $\Phi\gets\textsc{EnumerateClasses}(G')$ \COMMENT{$\Phi=\{(\varphi_U,a_U,U)\mid \varphi_U\neq\emptyset\}$, each with witness $a_U\in\varphi_U$}
\FORALL{$(\varphi,a_0,U)\in\Phi$}
  \STATE $S\gets\{a_0\}$
  \STATE $R\gets\{(p,q)\mid 1\le p<q\le t\}$
  \STATE compute $C_\olg(a_0)$;    
  \STATE $R\gets R\setminus C_\olg(a_0)$
  \WHILE{$R\neq\emptyset$}
  \STATE $T \gets \{ e_p(x) \neq e_q(x)~|~(p,q)\in R \}$
  \STATE $(\textsf{sat},a,T')\gets \smsmt(\varphi,T)$
  \STATE $\textbf{if}\ T' = \emptyset \ \textbf{then break}$
  \STATE compute $C_\olg(a)$
  \STATE $R \gets R\setminus C_\olg(a)$
  \STATE $S \gets S\cup \{a\}$
  
  \ENDWHILE
  
\STATE $\Delta
       \gets \bigcup_{a\in S} C_\olg(a)$
 \IF{$\Delta=\emptyset$}
        \STATE $\Sigma_\varphi\gets\{a_0\}$
    \ELSE
        \STATE $\Sigma_\varphi
          \gets\textsc{GreedySetCover}{(S,\Delta, C_\olg)}$
          \COMMENT{Algorithm~\ref{alg:SigmaI-setcover}}
    \ENDIF      
\STATE $\Sigma_I\gets\Sigma_I\cup\Sigma_\varphi$ 
\ENDFOR
\STATE \textbf{return} $\Sigma_I$
\end{algorithmic}
\end{algorithm}

%--------------------------------------------
\subsection{Model-specific representatives}\label{ssec:mge}

For MBT, the reference model $M$ is known.  We are also given finite sets $G$
and $E$ such that 
the SUT has an SFSM representation whose transitions use
only guards from $G$ and output assignments from $E$.  The transition labels of 
$M$ need not belong to $G\times E$. Therefore, the
model-specific construction compares labels occurring
in $M$  with  possible SUT  labels from $G\times E$.

Let
\[
G_h=\{g^M_1,\ldots,g^M_u\}
\]
be the set of guards occurring in \(M\), and let
\[
E_h
=
\{e\mid \exists g\colon (g,e)\in A_M\}
=
\{e^M_1,\ldots,e^M_s\}
\]
be the set of output assignments occurring in \(M\). Furthermore, let
\[
G=\{g_1,\ldots,g_r\},
\qquad
E=\{e_1,\ldots,e_t\}.
\]
We  define
\[
O_M(i)
=
\{p\mid(g^M_i,e^M_p)\in A_M\}.
\]

Thus, \(O_M(i)\) contains the indices of the output assignments 
occurring with guard $g^M_i$ in $M$. Since $g^M_i\in G_h$, the set is non-empty.

According to Definition~\ref{def:repMGE}, the relevant guard-overlap
requirements are indexed by
\[
\Delta_G^M
=
\left\{
(i,j)
\;\middle|\;
1\leq i\leq u,\;
1\leq j\leq r,\;
g^M_i\cap g_j\neq\emptyset
\right\}.
\]
The output-separation requirements are indexed by
\[
\Delta_O
=
\left\{
(i,j,p,q)
\;\middle|\;
\begin{array}{l}
1\leq i\leq u,\;1\leq j\leq r,\;
p\in O_M(i),\;1\leq q\leq t,\\[1mm]
\exists a\in g^M_i\cap g_j\colon
e^M_p(a)\neq e_q(a)
\end{array}
\right\}.
\]
An output-separation witness is also a witness of the corresponding
guard overlap. Therefore, the guard-overlap requirements can be
restricted to
\[
\Delta_G'
=
\Delta_G^M
\setminus
\left\{
(i,j)
\;\middle|\;
\exists p,q\colon(i,j,p,q)\in\Delta_O
\right\}.
\]
Let
\[
\Delta=\Delta_G'\cup\Delta_O.
\]
For a concrete input \(a\in D_I\), define the model-specific cover
function
\[
\begin{split}
C_M(a)
={}&
\left\{
(i,j)\in\Delta_G'
\;\middle|\;
a\in g^M_i\cap g_j
\right\}\\
&{}\cup
\left\{
(i,j,p,q)\in\Delta_O
\;\middle|\;
a\in g^M_i\cap g_j
\land e^M_p(a)\neq e_q(a)
\right\}.
\end{split}
\]
A subset $\Sigma_I\subseteq S$ is representative for $(M,G,E)$
if
\[
\Delta
\subseteq
\bigcup_{a\in \Sigma_I}C_M(a).
\]

The general
construction can be applied to \(G_h\cup G\) and \(E_h\cup E\). It would,
however, introduce requirements that are not needed by Definition~\ref{def:repMGE}. 
Algorithm~\ref{alg:SigmaI-greedy-model} constructs only
the requirements between a label occurring in \(M\) and a possible label
from \(G\times E\).

\begin{algorithm}[H]
\caption{Greedy construction of $\Sigma_I$ for a fixed model $M$,\newline
Function $\textsc{ConstructSigma}(M:\textit{SFSM}, G:\pwr(\textit{Guards}), E:\pwr(\textit{OutputAssignments})) : \pwr(\DI)$}
\label{alg:SigmaI-greedy-model}
\begin{algorithmic}[1]
\REQUIRE Model $M$ with labels $(g^{M}_i,e^{M}_p)$, $G_h=\{g^M_1,\dots, g^M_u\}, E_h=\{e^M_1,\dots, e^M_s\}$; guards $G=\{g_1,\dots,g_r\}$; output assignments $E=\{e_1,\dots,e_t\}$.
\ENSURE A finite representative set $\Sigma_I\subseteq\DI$ for $(M,G,E)$.
\FOR{$i=1$ \textbf{to} $u$}
  \STATE $O_M(i)\gets\{p\mid(g^M_i,e^M_p)\in A_M\}$
\ENDFOR
\STATE $S\gets\emptyset$; $\Delta_G'\gets\emptyset$; $\Delta_O\gets\emptyset$
\FOR{$1 \le i\le u$}
\FOR{$1 \le j\le r$}
       \STATE $T \gets \{ e^M_p(x) \neq e_q(x)~|~p\in O_M(i), 1\le q\le t\}$
      \STATE $(\mathsf{sat},a,T')\gets\smsmt(g^M_i\wedge g_j,T)$
      \STATE \textbf{if} $ \neg \mathsf{sat}$ \textbf{then continue}
      \IF{$T' = \emptyset$}
        \STATE $S\gets S\cup\{a\}$
        \STATE $\Delta_G'\gets\Delta_G'\cup\{(i,j)\}$
      \ELSE
        \REPEAT
          \STATE $S\gets S\cup \{a\}$
          % \STATE $\Delta_O\gets\Delta_O\cup\{(i,j,p,q)\mid(p,q)\in T'\}$
          \STATE $\Delta_O\gets\Delta_O\cup\{(i,j,p,q)\mid (e^M_p(x)\neq e_q(x))\in T'\}$
          \STATE $T\gets T\setminus T'$
          \STATE \textbf{if} $T = \emptyset$ \textbf{then break} 
          \STATE $(\textsf{sat},a,T') \gets \smsmt(g^M_i\wedge g_j,T) $
        \UNTIL{$T' = \emptyset$}
      \ENDIF
\ENDFOR
\ENDFOR
\STATE $\Delta\gets\Delta_G'\cup\Delta_O$
\STATE $\Sigma_I\gets \textsc{GreedySetCover}(S,\Delta,C_M)$ \COMMENT{Algorithm~\ref{alg:SigmaI-setcover}}
\STATE \textbf{return} $\Sigma_I$
\end{algorithmic}
\end{algorithm}

% -----------------------------------------------------------------------------------------------------------------------------------------------------
\subsection{Pair-specific representatives}\label{ssec:mm}

 When both the reference model $M$ and an SFSM $M'$ representing the SUT are known, Definition~\ref{def:repMM} requires only guard overlaps and output-separation requirements between labels occurring in the two models. Algorithm~\ref{alg:SigmaI-greedy-modelMMprime} constructs a representative input set for $(M,M')$. For $a\in \DI$, define 
\[C_{M,M'}(a)=\{(i,j)\in \Delta_G'\mid a\in g_i\cap g_j'\}\cup \{(i,j,p,q)\in \Delta_O\mid a\in g_i\cap g_j'\wedge e_p(a)\neq e_q'(a)\}.\]
For each $g_i\in G_h$, define
\[
O_M(i)=\{\,p\mid (g_i,e_p)\in A_M\,\}.
\]
For each $g_j'\in G_{h'}$, define
\[
O_{M'}(j)=\{\,q\mid (g_j',e_q')\in A_{M'}\,\}.
\]
These sets contain the indices of all output assignments occurring with the corresponding guards. Since $G_h$ and $G_{h'}$ contain only guards occurring in the transition relations, all these sets are non-empty. 

\begin{algorithm}[H]
\caption{Greedy construction of $\Sigma_I$ that is representative for $(M,M')$,\newline
Function $\textsc{ConstructSigma}(M,M':\textit{SFSM}) : \pwr(\DI)$}
\label{alg:SigmaI-greedy-modelMMprime}
\begin{algorithmic}[1]
\REQUIRE Models $M$ with transition labels $(g_i,e_p)$, guards $G_h=\{g_1,\dots,g_u\}$, output assignments $E=\{e_1,\dots,e_t\}$, and $M'$ with labels $(g_i',e_p')$, guards $G_{h'}=\{g_1',\dots,g_{u'}'\}$, output assignments $E'=\{e_1',\dots,e_{t'}'\}$.
\ENSURE A finite representative set $\Sigma_I\subseteq\DI$ for $(M,M')$.
\FOR {$ 1\le i\le u$}
 \STATE $O_M(i)\gets\{p\mid(g_i,e_p)\in A_M\}$
 \ENDFOR
\FOR {$ 1\le j\le u'$}
 \STATE $ O_{M'}(j)\gets\{q\mid(g_j',e_q')\in A_{M'}\}$
 \ENDFOR 
\STATE $S\gets\emptyset$; $\Delta_G'\gets\emptyset$; $\Delta_O\gets\emptyset$
\FOR{$1 \le i \le u$}
 \FOR{$1\le j\le u'$}
       \STATE $T \gets \{ e_p(x) \neq e_q'(x)~|~(p,q)\in O_M(i)\times O_{M'}(j) \}$
      \STATE $(\mathsf{sat},a,T')\gets\smsmt(g_i\wedge g_j',T)$
      \STATE $\textbf{if}\ \neg \mathsf{sat}\ \textbf{then continue}$  \COMMENT{The guards do not overlap}
      \IF{$T' = \emptyset$} 
        \STATE $S\gets S\cup \{a\}$ 
        \STATE $\Delta_G'\gets\Delta_G'\cup\{(i,j)\}$
      \ELSE %\COMMENT{Can distinguish at least one pair of output expressions $(e_p,e_q')$ on $g_i\cap g_j'$}
        %\STATE $W\gets\emptyset$
        \REPEAT
          \STATE $S\gets S\cup \{a\}$
          % \STATE $\Delta_O\gets\Delta_O\cup\{(i,j,p,q)\mid(p,q)\in T'\}$
          \STATE $\Delta_O\gets\Delta_O\cup\{(i,j,p,q)\mid (p,q)\in O_M(i)\times O_{M'}(j)\wedge (e_p(x)\neq e_q'(x))\in T'\}$
          \STATE $T\gets T\setminus T'$
          \STATE \textbf{if} $T = \emptyset$ \textbf{then break} 
          \STATE $(\textsf{sat},a,T') \gets \smsmt(g_i\wedge g_j',T) $
        \UNTIL{$T' = \emptyset$}
      \ENDIF
   \ENDFOR   
\ENDFOR
\STATE $\Delta\gets\Delta_G'\cup\Delta_O$
\STATE $\Sigma_I\gets \textsc{GreedySetCover}(S,\Delta, C_{M,M'})$ \COMMENT{Algorithm~\ref{alg:SigmaI-setcover}}
\STATE \textbf{return} $\Sigma_I$
\end{algorithmic}
\end{algorithm}

 For a fixed guard pair $(g_i,g_j')$, the set $T$ contains the separation constraints between the output assignments occurring with these guards. The first MaxSMT call also checks whether $ g_i\land g_j'$ is satisfiable. If it is not, the guard pair is skipped. If it is satisfiable and $T'= \emptyset$, the returned input is kept as a guard-overlap witness. 

 Otherwise, the algorithm adds the returned input to $S$, marks the requirements in $T'$ as covered and removes these constraints from $T$. The loop ends when $T=\emptyset$, meaning that all relevant separation requirements have been covered, or when $T'=\emptyset$, meaning that none of the remaining constraints is satisfiable on the current guard overlap.

After all guard pairs have been processed, $\textsc{GreedySetCover}$ removes candidates that are not needed to cover 
$ \Delta=\Delta_G'\cup\Delta_O$.
The resulting set $\Sigma_I\subseteq S$ is representative for $(M,M')$.

% ------------------------------------------------------------------------------------------------

\subsection{Optimisation for constant output assignments}\label{ssec:cont}
If every output assignment is constant, then any two assignments are either equal on $\DI$ or distinguishable by every input in $\DI$. Hence, any witness of a non-empty guard overlap  satisfies
every applicable output-separation requirement. Algorithm~\ref{alg:constant-outputs} therefore constructs witnesses
 only for non-empty guard overlaps.

\begin{algorithm}[H]
\caption{Requirements and candidates for constant outputs, \newline
Assumption: all assignments in $E$ are constant, \newline
Function $\textsc{ConstructSigma}(G : \pwr(\textit{Guards})) : \pwr(\DI)$.}
\label{alg:constant-outputs}
\begin{algorithmic}[1]
\REQUIRE Guards $G=\{g_1,\dots,g_r\}$.
\ENSURE A finite representative set $\Sigma_I\subseteq\DI$ for $(G,E)$.
\STATE $S\gets\emptyset$; $\Delta\gets\emptyset$
\FOR{$i=1$ \textbf{to} $r$}
  \FOR{$j=i$ \textbf{to} $r$}
    \STATE $(\mathsf{sat},a_{ij})\gets\smta(g_i\wedge g_j)$
    \IF{$\mathsf{sat}$}
      \STATE $\Delta\gets\Delta\cup\{(i,j)\}$; $S\gets S\cup\{a_{ij}\}$
    \ENDIF
  \ENDFOR
\ENDFOR
%\STATE \textbf{return} $(S,\Delta)$
\STATE $\Sigma_I\gets \textsc{GreedySetCover}(S,\Delta,C)$ \COMMENT{Algorithm~\ref{alg:SigmaI-setcover}}
\STATE \textbf{return} $\Sigma_I$
\end{algorithmic}
\end{algorithm}

The set $E$ is not an input parameter of Algorithm~\ref{alg:constant-outputs}, because the cover function contains only the guard-overlap requirements:
\[
C(a)=\{(i,j)\in\Delta\mid a\in g_i\cap g_j\}.
\]
The greedy set cover step can still remove redundant candidates. For example, an input in $g_1\cap g_2\cap g_3$ covers the guard overlaps $g_1\cap g_2$, $ g_2\cap g_3$ and $ g_1\cap g_3$. Other witnesses of these overlaps can be removed if they cover no remaining requirements. 

Such constant-output cases occur in control systems that map ranges of analogue sensor values to fixed discrete control commands.

\clearpage

% =======================================================================

\end{document}